\newtheorem{lem}{Lemma}[section]
\newtheorem{prop}{Proposition}[section]
\newtheorem{example}{Example}[section]
\newtheorem{defn}{Definition}[section]
\newtheorem{rem}{Remark}[section]
\numberwithin{equation}{section}
\newcommand{\beqa}{\begin{eqnarray}}
\newcommand{\eeqa}{\end{eqnarray}}
\newcommand{\nc}{\newcommand}
\newcommand{\rnc}{\renewcommand}
\nc{\cal}{\mathcal}
\nc{\goth}{\mathfrak}
\rnc{\bold}{\mathbf}
\renewcommand{\frak}{\mathfrak}
\renewcommand{\Bbb}{\mathbb}
\newcommand{\fpt}[7]{{}_4\phi_3\left[\begin{matrix} #1 , #2, #3, #4 \\
#5, #6, #7 \end{matrix}\,; q^2,q^2\right]}
\newcommand{\fpts}[6]{{}_3\phi_2\left[\begin{matrix} #1 , #2, #3 \\
#5, #6 \end{matrix}\,; q^2,q^2\right]}
\nc\kJ{\mathfrak J}
\nc\K{\mathbb K}
\nc{\Cal}{\mathcal}
\nc{\Xp}[1]{X^+(#1)}
\nc{\Xm}[1]{X^-(#1)}
\nc{\on}{\operatorname}
\nc{\ch}{\mbox{ch}}
\nc{\Z}{{\bold Z}}
\nc{\J}{{\mathcal J}}
\nc{\C}{{\bold C}}
\nc{\Q}{{\bold Q}}
\nc{\N}{{\Bbb N}}
\nc\beq{\begin{equation}}
\nc\enq{\end{equation}}
\nc\lan{\langle}
\nc\ran{\rangle}
\nc\bsl{\backslash}
\nc\mto{\mapsto}
\nc\lra{\leftrightarrow}
\nc\hra{\hookrightarrow}
\nc\sm{\smallmatrix}
\nc\esm{\endsmallmatrix}
\nc\sub{\subset}
\nc\ti{\tilde}
\nc\nl{\newline}
\nc\fra{\frac}
\nc\und{\underline}
\nc\ov{\overline}
\nc\ot{\otimes}
\nc\bbq{\bar{\bq}_l}
\nc\bcc{\thickfracwithdelims[]\thickness0}
\nc\ad{\text{\rm ad}}
\nc\Ad{\text{\rm Ad}}
\nc\Hom{\text{\rm Hom}}
\nc\End{\text{\rm End}}
\nc\Ind{\text{\rm Ind}}
\nc\Res{\text{\rm Res}}
\nc\Ker{\text{\rm Ker}}
\rnc\Im{\text{Im}}
\nc\sgn{\text{\rm sgn}}
\nc\tr{\text{\rm tr}}
\nc\Tr{\text{\rm Tr}}
\nc\supp{\text{\rm supp}}
\nc\card{\text{\rm card}}
\nc\bst{{}^\bigstar\!}
\nc\he{\heartsuit}
\nc\clu{\clubsuit}
\nc\spa{\spadesuit}
\nc\di{\diamond}
\nc\cW{\cal W}
\nc\cG{\cal G}
\nc\al{\alpha}
\nc\bet{\beta}
\nc\ga{\gamma}
\nc\de{\delta}
\nc\ep{\epsilon}
\nc\io{\iota}
\nc\om{\omega}
\nc\si{\sigma}
\rnc\th{\theta}
\nc\ka{\kappa}
\nc\la{\lambda}
\nc\ze{\zeta}
\nc\vp{\varpi}
\nc\vt{\vartheta}
\nc\vr{\varrho}
\nc\Ga{\Gamma}
\nc\De{\Delta}
\nc\Om{\Omega}
\nc\Si{\Sigma}
\nc\Th{\Theta}
\nc\La{\Lambda}
\nc\op{\overline{p}}
\nc\oE{\overline{E}}
\nc\onabla{\overline{\nabla}}
\nc\oDelta{\overline{\Delta}}
\nc\bepsilon{\overline{\epsilon}}
\nc\bak{\overline{k}}
\nc\bgamma{\overline{\gamma}}
\nc\kb{\mathfrak b}
\nc\hQ{\widehat Q}
\nc\kN{\mathfrak N}
\nc\kM{\mathfrak M}
\nc\kX{\mathfrak X}
\nc\kA{\mathfrak A}
\nc\kn{\mathfrak n}
\nc\km{\mathfrak m}
\nc\tfQ{\widehat{\mathfrak Q}}
\nc\boa{\bold a}
\nc\bob{\bold b}
\nc\boc{\bold c}
\nc\bod{\bold d}
\nc\boe{\bold e}
\nc\bof{\bold f}
\nc\bog{\bold g}
\nc\boh{\bold h}
\nc\boi{\bold i}
\nc\boj{\bold j}
\nc\bok{\bold k}
\nc\bol{\bold l}
\nc\bom{\bold m}
\nc\bon{\bold n}
\nc\boo{\bold o}
\nc\bop{\bold p}
\nc\boq{\bold q}
\nc\bor{\bold r}
\nc\bos{\bold s}
\nc\bou{\bold u}
\nc\bov{\bold v}
\nc\bow{\bold w}
\nc\boz{\bold z}
\nc\ba{\bold A}
\nc\bb{\bold B}
\nc\bc{\bold C}
\nc\bd{\bold D}
\nc\be{\bold E}
\nc\bg{\bold G}
\nc\bh{\bold H}
\nc\bi{\bold I}
\nc\bj{\bold J}
\nc\bk{\bold K}
\nc\bl{\bold L}
\nc\bm{\bold M}
\nc\bn{\bold N}
\nc\bo{\bold O}
\nc\bp{\bold P}
\nc\bq{\bold Q}
\nc\br{\bold R}
\nc\bs{\bold S}
\nc\bt{\bold T}
\nc\bu{\bold U}
\nc\bv{\bold V}
\nc\bw{\bold W}
\nc\bz{\bold Z}
\nc\bx{\bold X}
\nc\ca{\mathcal A}
\nc\cb{\mathcal B}
\nc\cc{\mathcal C}
\nc\cd{\mathcal D}
\nc\ce{\mathcal E}
\nc\cf{\mathcal F}
\nc\cg{\mathcal G}
\rnc\ch{\mathcal H}
\nc\ci{\mathcal I}
\nc\cj{\mathcal J}
\nc\ck{\mathcal K}
\nc\cl{\mathcal L}
\nc\cm{\mathcal M}
\nc\cn{\mathcal N}
\nc\co{\mathcal O}
\nc\cp{\mathcal P}
\nc\cq{\mathcal Q}
\nc\car{\mathcal R}
\nc\cs{\mathcal S}
\nc\ct{\mathcal T}
\nc\cu{\mathcal U}
\nc\cv{\mathcal V}
\nc\cz{\mathcal Z}
\nc\cx{\mathcal X}
\nc\cy{\mathcal Y}
\nc\e[1]{E_{#1}}
\nc\ei[1]{E_{\delta - \alpha_{#1}}}
\nc\esi[1]{E_{s \delta - \alpha_{#1}}}
\nc\eri[1]{E_{r \delta - \alpha_{#1}}}
\nc\ed[2][]{E_{#1 \delta,#2}}
\nc\ekd[1]{E_{k \delta,#1}}
\nc\emd[1]{E_{m \delta,#1}}
\nc\erd[1]{E_{r \delta,#1}}
\nc\ef[1]{F_{#1}}
\nc\efi[1]{F_{\delta - \alpha_{#1}}}
\nc\efsi[1]{F_{s \delta - \alpha_{#1}}}
\nc\efri[1]{F_{r \delta - \alpha_{#1}}}
\nc\efd[2][]{F_{#1 \delta,#2}}
\nc\efkd[1]{F_{k \delta,#1}}
\nc\efmd[1]{F_{m \delta,#1}}
\nc\efrd[1]{F_{r \delta,#1}}
\nc\fa{\frak a}
\nc\fb{\frak b}
\nc\fc{\frak c}
\nc\fd{\frak d}
\nc\fe{\frak e}
\nc\ff{\frak f}
\nc\fg{\frak g}
\nc\fh{\frak h}
\nc\fj{\frak j}
\nc\fk{\frak k}
\nc\fl{\frak l}
\nc\fm{\frak m}
\nc\fn{\frak n}
\nc\fo{\frak o}
\nc\fp{\frak p}
\nc\fq{\frak q}
\nc\fr{\frak r}
\nc\fs{\frak s}
\nc\ft{\frak t}
\nc\fu{\frak u}
\nc\fv{\frak v}
\nc\fz{\frak z}
\nc\fx{\frak x}
\nc\fy{\frak y}
\nc\fA{\frak A}
\nc\fB{\frak B}
\nc\fC{\frak C}
\nc\fD{\frak D}
\nc\fE{\frak E}
\nc\fF{\frak F}
\nc\fG{\frak G}
\nc\fH{\frak H}
\nc\fJ{\frak J}
\nc\fK{\frak K}
\nc\fL{\frak L}
\nc\fM{\frak M}
\nc\fN{\frak N}
\nc\fO{\frak O}
\nc\fP{\frak P}
\nc\fQ{\frak Q}
\nc\fR{\frak R}
\nc\fS{\frak S}
\nc\fT{\frak T}
\nc\fU{\frak U}
\nc\fV{\frak V}
\nc\fZ{\frak Z}
\nc\fX{\frak X}
\nc\fY{\frak Y}
\nc\tfi{\ti{\Phi}}
\nc\bF{\bold F}
\rnc\bol{\bold 1}
\nc\ua{\bold U_\A}
\nc\qinti[1]{[#1]_i}
\nc\q[1]{[#1]_q}
\nc\xpm[2]{E_{#2 \delta \pm \alpha_#1}}  %\xpm{j}{l}
\nc\xmp[2]{E_{#2 \delta \mp \alpha_#1}}
\nc\xp[2]{E_{#2 \delta + \alpha_{#1}}}
\nc\xm[2]{E_{#2 \delta - \alpha_{#1}}}
\nc\hik{\ed{k}{i}}
\nc\hjl{\ed{l}{j}}
\nc\qcoeff[3]{\left[ \begin{smallmatrix} {#1}& \\ {#2}& \end{smallmatrix}
\negthickspace \right]_{#3}}
\nc\qi{q}
\nc\qj{q}
\nc\ufdm{{_\ca\bu}_{\rm fd}^{\le 0}}
\nc\isom{\cong} 
\nc{\pone}{{\Bbb C}{\Bbb P}^1}
\nc{\pa}{\partial}
\nc{\F}{{\mathcal F}}
\nc{\Sym}{{\goth S}}
\nc{\A}{{\mathcal A}}
\nc{\arr}{\rightarrow}
\nc{\larr}{\longrightarrow}
\nc{\ri}{\rangle}
\nc{\lef}{\langle}
\nc{\W}{{\mathcal W}}
\nc{\uqatwoatone}{{U_{q,1}}(\su)}
\nc{\uqtwo}{U_q(\goth{sl}_2)}
\nc{\dij}{\delta_{ij}}
\nc{\divei}{E_{\alpha_i}^{(n)}}
\nc{\divfi}{F_{\alpha_i}^{(n)}}
\nc{\Lzero}{\Lambda_0}
\nc{\Lone}{\Lambda_1}
\nc{\ve}{\varepsilon}
\nc{\phioneminusi}{\Phi^{(1-i,i)}}
\nc{\phioneminusistar}{\Phi^{* (1-i,i)}}
\nc{\phii}{\Phi^{(i,1-i)}}
\nc{\Li}{\Lambda_i}
\nc{\Loneminusi}{\Lambda_{1-i}}
\nc{\vtimesz}{v_\ve \otimes z^m}
\nc{\asltwo}{\widehat{\goth{sl}_2}}
\nc\ag{\widehat{\goth{g}}}  
\nc\teb{\tilde E_\boc}
\nc\tebp{\tilde E_{\boc'}}
\newcommand{\eeq}{\end{equation}}
\newcommand{\ben}{\begin{eqnarray}}
\newcommand{\een}{\end{eqnarray}}
\begin{document}

\title[$q-$Onsager algebra and $q-$special functions]
{The $q-$Onsager algebra and  multivariable \\ $q-$special functions}

\author{Pascal Baseilhac$^{\dagger,*}$}
\address{$^\dagger$Laboratoire de Math\'ematiques et Physique Th\'eorique CNRS/UMR 7350,
 F\'ed\'eration Denis Poisson FR2964,
Universit\'e de Tours,
Parc de Grammont, 37200 Tours, 
FRANCE}
\email{baseilha@lmpt.univ-tours.fr}

\author{Luc Vinet$^{*}$}
\address{$^{*}$ Centre de recherches math\'ematiques Universit\'e de Montr\'eal, CNRS/UMI 3457, P.O. Box 6128, Centre-ville Station, Montr\'eal (Qu\'ebec), H3C 3J7 CANADA}
\email{vinet@CRM.UMontreal.CA}

\author{Alexei Zhedanov$^{\diamond,*,\times}$}
\address{$^{\diamond}$ Department of Mathematics, Information School, Renmin University of China, Beijing 100872,
CHINA}
\address{$^{\times}$ Donetsk Institute for Physics and Technology, Donetsk 83114, UKRAINE}
\email{zhedanov@yahoo.com}

\date{August, 2017}

\begin{abstract}
Two sets of mutually commuting  $q-$difference operators $x_i$ and $y_j$, $i,j=1, ...,N$ such that
$x_i$ and $y_i$ generate a homomorphic image of the $q-$Onsager algebra for each $i$ are introduced.
The common polynomial eigenfunctions of each set are found to be entangled product of elementary Pochhammer functions in $N$ variables and $N+3$ parameters. Under certain conditions on the parameters, they form  two `dual' bases of polynomials in $N$ variables. The action of each operator with respect to its dual basis is block tridiagonal. The overlap coefficients between the two dual bases are expressed as entangled products of $q-$Racah polynomials and satisfy an orthogonality relation.
The overlap coefficients between either one of these bases and the multivariable monomial basis are also considered. One obtains in this case entangled products of dual $q-$Krawtchouk polynomials. Finally, the `split' basis in which the two families of operators act as block bidiagonal matrices is also provided.
\end{abstract}

\maketitle

\vskip -0.5cm

{\small MSC:\ 81R50;\ 81R10;\ 81U15;\ 39A70;\ 33D50;\ 39A13.}

{{\small  {\it \bf Keywords}:  $q-$Onsager algebra;  Multivariable polynomials; Orthogonality; Tridiagonal pairs}}

\vspace{5mm}

%\tableofcontents

% % % % % % % % % % % % % % % % % % % % %

\section{Introduction}
In view of the intrinsic mathematical interest from the representation theoretic viewpoint and of the various applications in physical models, much attention has been devoted to the algebraic underpinning of multivariate special functions and orthogonal polynomials. The connection between double affine Hecke algebras  or Cherednik algebras and Macdonald--Koornwinder polynomials has proved to be very fruitful \cite{1}. The search for a similar interpretation of multivariate polynomials of the Tratnik type and their $q-$analogs \cite{2,3,4,Iliev,BM} has been initiated lately. \vspace{1mm}

It has been appreciated in the univariate case that the Askey--Wilson or Zhedanov algebra \cite{7} with the Bannai--Ito algebra \cite{8,9} and Racah algebra \cite{10} as special cases\footnote{The defining relations of the Askey--Wilson algebra are in terms of a scalar parameter $q$. The Bannai--Ito and Racah algebras correspond to the specialization $q^2=-1$  and $q^2=1$, respectively.}, encodes the bispectrality properties of the corresponding polynomials.  Different generalizations of these three algebras have been recently considered in order to tackle multivariate extensions. Regarding the two special cases, extensions of the Bannai--Ito and Racah algebras have been introduced and studied \cite{12,13}. Constructions rely on the tensorial products  of the two Lie (super)algebras i.e. $osp(1|2)$ and $sl_2$, respectively. Bases for the generalized Bannai--Ito and Racah algebras' modules have been explicitly constructed and the overlap coefficients between these bases have been seen to be expressed in terms of the corresponding multivariate polynomials. While a construction of a higher rank Askey--Wilson algebra along those lines is still awaited, an alternative framework to extend the algebraic picture to the multivariate realm for $q$ not a root of unity  is provided by the $q-$Onsager algebra \cite{Ter03,Bas2}. In this last context, infinite and finite dimensional modules of the $q-$Onsager algebra have been constructed in terms of the multivariate Gasper--Rahman polynomials, in \cite{BM} $N-$pairs of operators generating the $q-$Onsager algebra were related to Iliev's families of $q-$difference operators \cite{Iliev}.  Another approach to connect $q-$special functions to the $q-$Onsager algeba is considered here. It hinges on the fact that the $q-$Onsager algebra is a coideal subalgebra of $U_q(\widehat{sl_2})$ \cite{Bas2,Bas3} (see also \cite{Ko}) and that tensor product representations of this algebra can thus be expected to lead  to multivariate extensions of the Askey--Wilson polynomials.\vspace{1mm}

There is a rather large class of quantum integrable models in the continuum or on lattices 
whose local integrals of motion can be written in terms of the elements
of an Abelian subalgebra of the $q-$Onsager algebra \cite{Bas2}. This is so for instance in the case of the two-dimensional Ising model \cite{Ons}, of the 
superintegrable Potts model \cite{Potts} at $q=1$ or of the open XXZ spin chain for $q\neq 1$ \cite{BK2,BB3}.
For the integrable models that fall in this class, finding the spectrum and eigenstates of the Hamiltonian
relies on the construction of explicit finite or infinite dimensional representations of the $q-$Onsager algebra. We here show that in this
algebraic framework, a formulation of the Hamiltonian's eigenfunctions in terms of multivariable $q-$special functions is possible - an observation that is bound to prove quite fruitful in the analysis of those models.\vspace{1mm}

In the following paper, we construct explicitly three different types of bases for the $q-$Onsager algebra in terms of multivariate $q-$special functions. Two bases are  such that the $q-$Onsager algebra's generators act as diagonal or block tridiagonal matrices. In the third basis, the so-called `split' basis, they act as upper or lower bidiagonal matrices.  \vspace{1mm}  

The paper is structured as follows. In Section 2, the definition of the $q-$Onsager algebra  (see Definition \ref{Oq}) and the action of the two fundamental generators on tensor products of $U_{q}(\widehat{sl_2})$ evaluation representations  are recalled. In Section 3, using the $q-$difference operator realization of $U_{q}(sl_2)$ \cite{S1},  two families of mutually commuting $q-$difference operators in $i-$variables $\{z_1,z_2,...,z_i\}$, $i=1,...,N$, that generate the $q-$Onsager algebra,  are given in Proposition \ref{realqdiff}. Their respective eigenfunctions are found to be entangled products of elementary Pochhammer functions in the $N$ variables with $N+3$ additional parameters. These functions form two `dual' bases of the polynomial vector space, see Propositions \ref{eigenF}, \ref{eigenF2} and Lemma \ref{lem-basis}. Namely, the common eigenfunctions of $\cW_0^{(i)}$, $i=1,2,...,N$, can be written as:
\beqa
\qquad F^{(N)}_{\{n\}} (z_1,z_2,...,z_N)=\prod_{i=1}^{N}z_i^{2j_i} %=\prod_{i=1}^{N}(-1)^{n_i}(z^{(i)}_-)^{n_i} q^{-2(_{ 2}^{n_i})}
%z_i^{2j_i-n_i}
\left(z^{(i)}_-/z_i;q^{-2}\right)_{n_i}\left(z^{(i)}_+/z_i;q^{-2}\right)_{2j_i-n_i}.\label{Fpoc}
\eeqa
with (\ref{poch}), $j_i\in\frac{1}{2}{\mathbb N}$ and $n_i\in \{0,1,...,2j_i\}$. The `dual' eigenfunctions $\tilde{F}^{(N)}_{\{\tilde{n}\}}(\{z\})$  associated with $\cW_1^{(i)}$ are obtained through the substitutions:
\beqa
n_i \rightarrow \tilde{n}_i,\quad z^{(i)}_\pm \rightarrow \tilde{z}^{(i)}_\pm ,\quad q \rightarrow q^{-1},\nonumber 
\eeqa
with (\ref{zpmi}), (\ref{ztildepmi}). Then, as stated in Proposition \ref{proptrid} it is shown that the action of  $\cW_1^{(i)}$ (resp. $\cW_0^{(i)}$) in the eigenbasis of $\cW_0^{(i)}$ (resp. $\cW_1^{(i)}$) is `block' tridiagonal.  The cases $N=1$ and $N=2$ are described in details. In Section 4,  the overlap coefficients between the two dual bases 
or between any of the two bases and  the multivariable monomial basis are identified. They are written as entangled products of the $q-$Racah polynomials and dual $q-$Krawtchouk polynomials, respectively. In Section 5, another basis generalizing the (one-variable) `split' basis \cite[Remark 2.1]{Ro2} is provided. It has for elements:
\beqa
\qquad G^{(N)}_{\{n\}} (z_1,z_2,...,z_N)=\prod_{i=1}^{N}z_i^{2j_i} 
\left(z^{(i)}_-/z_i;q^{-2}\right)_{n_i}\left(\overline{z}^{(i)}_+/z_i;q^{2}\right)_{2j_i-n_i}\label{Gpoc}
\eeqa
with (\ref{zbar}) defining the parameters. It is shown that $\cW_0^{(i)}$ (resp. $\cW_1^{(i)}$) act as upper (resp. lower) block bidiagonal matrices in this basis. Concluding remarks will be found in the last section. Three appendices supplement the text. \vspace{1mm}

Let us mention that the subject of this paper is closely related with the theory of tridiagonal pairs \cite{Ter01}. Indeed, from the point of view of representation theory, all orthogonal polynomials in one variable can be cast in the framework
of Leonard pairs \cite{Ter2}. Similarly, our results find a natural interpretation in terms of tridiagonal pairs \cite{Ter01}: provided the polynomial vector space is irreducible, the $q-$difference operators $\cW_0^{(i)}$, $\cW_1^{(i)}$, $i=1,2,...,N$ offer examples of tridiagonal pairs of $q-$Racah type (see \cite{Ter01,Ter03} and related works). In this context, the three bases constructed here generate explicitly the two examples of the two sets of eigenspaces defined in \cite[Definition 2.1]{Ter01} and of the `split' subspaces introduced in \cite[Theorem 4.6]{Ter01}.\vspace{1mm}

\subsection{Notations}
In this paper, we fix a nonzero complex number $q$ which is not a root of unity.  We will use the standard $q$-shifted factorials (also called $q-$Pochhammer functions) \cite{KS}:
\beqa
(a;q)_n=\prod_{k=0}^{n-1}(1-aq^{k}), \quad
%(a;q)_{\infty}=\prod_{k=0}^{\infty}(1-aq^{k}) \quad \mbox{and} \quad
(a_1,a_2,\dots,a_k;q)_n=\prod_{j=1}^k(a_j;q)_n.\label{poch}
\eeqa

Let $\mathcal{P}^{(N)}_z={\mathbb C}[ z_1,z_2,...,z_N]$ be the vector space of  polynomials of total degree $2\kJ_N$ in the variables $z_1,z_2,...,z_N$. We denote the $q$-shift difference operators in the $j$-th variable acting on a function $f(z)\equiv f(z_1,z_2,\dots, z_N)$ as:
\begin{eqnarray*}
T_{\pm}^{(i)} f(z)=f(z_1,z_2,\dots,q^{\pm 1}z_i,\dots,z_N).\nonumber
\end{eqnarray*}
\vspace{1mm}

\section{The $q-$Onsager algebra and $q-$difference operators}
In this section, we first introduce the $q-$Onsager algebra through generators and relations \cite{Ter03,Bas2}. We recall how it is embedded into $U_{q}(\widehat{sl_2})$ as a certain coideal subalgebra \cite{Bas3} and describe the action of the generators on tensor product of evaluation representations. Using the $q-$difference realization of $U_{q}(sl_2)$ (see Appendix \ref{A3}), we shall obtain two families of mutually commuting  $q-$difference operators in $i-$variables.
\vspace{1mm}
\begin{defn}[\cite{Ter03,Bas2}] \label{Oq}
Let $\rho$ be a complex scalar.  The $q-$Onsager algebra $O_q(\widehat{sl_2})$ is the associative algebra with unit and standard generators $\textsf{W}_0,\textsf{W}_1$ subject to the relations\footnote{The $q-$commutator $\big[X,Y\big]_q=qXY-q^{-1}YX$, where $q$ is called the deformation parameter, is introduced.}
\beqa
[\textsf{W}_0,[\textsf{W}_0,[\textsf{W}_0,\textsf{W}_1]_q]_{q^{-1}}]=\rho[\textsf{W}_0,\textsf{W}_1]\
,\qquad
[\textsf{W}_1,[\textsf{W}_1,[\textsf{W}_1,\textsf{W}_0]_q]_{q^{-1}}]=\rho[\textsf{W}_1,\textsf{W}_0]\
\label{qDG} . \eeqa
\end{defn}
\begin{rem} For $\rho=0$ the relations (\ref{qDG})
reduce to the $q-$Serre relations of $U_{q}(\widehat{sl_2})$. For $q=1$, $\rho=16$ they coincide with the Dolan-Grady relations \cite{DG}.
\end{rem}

The $q-$Onsager algebra is known to be isomorphic\footnote{For the proof of isomorphism, see \cite{Ko}.} to a certain coideal subalgebra of $U_{q}(\widehat{sl_2})$ \cite{Bas3}. Introduce the Chevalley generators  $E_i,F_i,q^{H_i}$, $i=0,1$ of $U_{q}(\widehat{sl_2})$, see Appendix A. 
\begin{prop}[\cite{Bas3}]
\label{iso}
Let $\{k_\pm,\epsilon_\pm\}$ be complex scalars. There is an algebra morphism $O_{q}(\widehat{sl_2}) \mapsto U_{q}(\widehat{sl_2})$ such that
\beqa
{\textsf W}_0 &\mapsto& k_+E_1q^{H_1/2} + k_-F_1q^{H_1/2}+\epsilon_+ q^{H_1},\label{defW}\\
{\textsf W}_1 &\mapsto& k_-E_0q^{H_0/2} + k_+F_0q^{H_0/2}+\epsilon_- q^{H_0}\nonumber
\eeqa
with
\beqa
\rho=(q+q^{-1})^2k_+k_-\ .\label{rho}
\eeqa
\end{prop}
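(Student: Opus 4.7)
The strategy is direct verification. Substituting the proposed images of $\textsf{W}_0$ and $\textsf{W}_1$ into the two $q$-Dolan-Grady relations \er{qDG}, I would expand the nested $q$-commutators and check that both sides coincide when $\rho=(q+q^{-1})^2 k_+ k_-$. The two relations are interchanged by the formal symmetry $0\leftrightarrow 1$, $k_+\leftrightarrow k_-$, $\epsilon_+\leftrightarrow\epsilon_-$, which leaves both the proposed assignment and the value of $\rho$ invariant, so it suffices to establish the first relation and to invoke this symmetry for the second.

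To unpack the first relation, I set $A=k_+ E_1 q^{H_1/2}$, $B=k_- F_1 q^{H_1/2}$, $C=\epsilon_+ q^{H_1}$, so that $\textsf{W}_0\mapsto A+B+C$. Expanding $[\textsf{W}_0,[\textsf{W}_0,[\textsf{W}_0,\textsf{W}_1]_q]_{q^{-1}}]$ by trilinearity produces twenty-seven cubic terms of the shape $[X_1,[X_2,[X_3,\textsf{W}_1]_q]_{q^{-1}}]$ with $X_i\in\{A,B,C\}$. Each is evaluated using the Chevalley relations $q^{H_1/2}E_1 q^{-H_1/2}=qE_1$, $q^{H_1/2}F_1 q^{-H_1/2}=q^{-1}F_1$, and $[E_1,F_1]=(q^{H_1}-q^{-H_1})/(q-q^{-1})$, together with $[E_0,F_1]=[E_1,F_0]=0$ and the weight action $q^{H_1}E_0 q^{-H_1}=q^{-2}E_0$, $q^{H_1}F_0 q^{-H_1}=q^{2}F_0$ in $U_q(\widehat{sl_2})$.

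The pure-$A$ and pure-$B$ triples, when applied to the $E_0$ and $F_0$ summands of the image of $\textsf{W}_1$, reconstruct exactly the $q$-Serre combinations of $U_q(\widehat{sl_2})$ and therefore vanish. The triples involving at least one factor $C$ act only by weight shifts on the image of $\textsf{W}_1$, so after pushing $q^{H_1}$ through via the weight action above they collapse to scalar multiples of lower-degree commutators, ultimately contributing to the coefficient of $[\textsf{W}_0,\textsf{W}_1]$. The essential non-trivial contributions come from the mixed cubics with two $A$'s and one $B$ (and symmetrically two $B$'s and one $A$): a single application of $[E_1,F_1]$ there produces $q^{\pm H_1}$, which then recombines with the remaining factor $E_1 q^{H_1/2}$ or $F_1 q^{H_1/2}$ to give a scalar multiple of $\textsf{W}_0$, and hence, via the outer brackets, a scalar multiple of $[\textsf{W}_0,\textsf{W}_1]$. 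Summing all surviving contributions and matching the right-hand side $\rho[\textsf{W}_0,\textsf{W}_1]$ pins down $\rho=(q+q^{-1})^2 k_+ k_-$.

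The principal obstacle is the combinatorial bookkeeping of the twenty-seven cubic terms. The decisive simplification is to exploit the $\ad(q^{H_1})$-grading: every monomial in the expansion is $H_1$-weight homogeneous, and only the weight-zero combinations can contribute to $[\textsf{W}_0,\textsf{W}_1]$, which drastically reduces the list of terms actually requiring evaluation. Carrying this weight-graded accounting through for the $q$-Serre-vanishing pieces, the purely Cartan pieces and the weight-zero remainder completes the verification of the first relation, and the stated symmetry then delivers the second.
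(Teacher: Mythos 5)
The paper itself offers no proof of this proposition: it is imported from \cite{Bas3}, and the bracketed citation in the statement is the paper's entire justification. So a direct verification is the right thing to attempt, and your overall architecture --- expand the nested $q$-commutators trilinearly into the twenty-seven cubics, kill the pure $E_1$/$F_1$ triples via the $q$-Serre relations, and extract $\rho$ from the mixed terms where $[E_1,F_1]=(q^{H_1}-q^{-H_1})/(q-q^{-1})$ fires --- is the standard route to this result.

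As written, however, the argument contains one genuine error and one genuine gap. The error is the claim that ``only the weight-zero combinations can contribute to $[\textsf{W}_0,\textsf{W}_1]$.'' The commutator $[\textsf{W}_0,\textsf{W}_1]$ is not concentrated in $H_1$-weight zero: its component $k_+^2[E_1q^{H_1/2},F_0q^{H_0/2}]=(q-q^{-1})k_+^2\,E_1F_0\,q^{(H_0+H_1)/2}$ lives in weight $+4$ and is nonzero, and it is precisely by matching the weight $\pm 4$ sectors (where the $AAB$-type cubics acting on the $F_0$ summand reduce via $[E_1,F_1]$) that one reads off $\rho=(q+q^{-1})^2k_+k_-$ most directly. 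Discarding everything outside weight zero, as your ``decisive simplification'' instructs, throws away exactly the terms that determine $\rho$. The gap is that the one step carrying all the content --- that the surviving mixed contributions sum, in every weight sector simultaneously, to $(q+q^{-1})^2k_+k_-$ times the corresponding component of $[\textsf{W}_0,\textsf{W}_1]$, while the terms quadratic and cubic in $\epsilon_+$ cancel outright (as they must, since $\rho$ does not depend on $\epsilon_\pm$) --- is asserted rather than computed; nothing in the write-up actually produces the factor $(q+q^{-1})^2$. Two smaller inaccuracies: the pure-$A$ triple acting on the $F_0$ summand (and pure-$B$ on $E_0$) does not vanish by the $q$-Serre relations but because the middle $q^{-1}$-commutator already annihilates $E_1F_0\,q^{(H_0+H_1)/2}$ (a one-line weight check); and the assertion that the dressed generators $E_1q^{H_1/2}$, $E_0q^{H_0/2}$ satisfy the $q$-Serre relation with the same $q^{\pm 1}$ exponents in the nested brackets requires its own short verification, since the $q^{H/2}$ dressings shift each $q$-commutator by weight-dependent powers of $q$.
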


The action of the generators of the $q-$Onsager algebra on tensor product representations can be considered as follows. To this end, the concept of coaction map \cite{CP} is needed.
\begin{prop}[\cite{BSh1}]\label{coac} 
Let $k_\pm$ be complex scalars and take $\rho$ as in (\ref{rho}). The $q-$Onsager algebra $O_{q}(\widehat{sl_2})$ is a left $U_{q}(\widehat{sl_2})-$comodule algebra with coaction map $\delta: O_{q}(\widehat{sl_2})\rightarrow U_{q}(\widehat{sl_2})\otimes O_{q}(\widehat{sl_2})$
%counit $\epsilon: {\mathbb T}\mapsto {\mathbb C}$ and antipode 
%$s: {\mathbb T}\mapsto {\mathbb T}$ 
such that
\beqa
\delta({\textsf W}_0)&=& (k_+E_1q^{H_1/2} + k_-F_1q^{H_1/2})\otimes 1 + q^{H_1} \otimes {\textsf W}_0,\label{deltadef}\\
\delta({\textsf W}_1)&=& (k_-E_0q^{H_0/2} + k_+F_0q^{H_0/2})\otimes 1 + q^{H_0} \otimes {\textsf W}_1.\ \nonumber
\eeqa
\end{prop}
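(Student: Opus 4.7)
The plan is to verify the two requirements defining a left comodule algebra: (i) that $\delta$ extends multiplicatively to a well-defined algebra morphism, equivalently that $\delta(\textsf W_0), \delta(\textsf W_1) \in U_q(\widehat{sl_2})\otimes O_q(\widehat{sl_2})$ satisfy the $q$-Dolan--Grady relations (\ref{qDG}); and (ii) the coaction axioms $(\Delta\otimes\id)\circ\delta = (\id\otimes\delta)\circ\delta$ and $(\epsilon\otimes\id)\circ\delta = \id$ hold. Writing $\delta(\textsf W_j) = A_j\otimes 1 + K_j\otimes \textsf W_j$ with $A_0 = k_+E_1q^{H_1/2}+k_-F_1q^{H_1/2}$, $K_0 = q^{H_1}$ (and analogously for $j=1$), the formulas in (\ref{deltadef}) already have the shape of a coproduct of a left coideal generator.

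My approach is to reduce the task to the Hopf-algebra structure of $U_q(\widehat{sl_2})$ via the morphism $\phi: O_q(\widehat{sl_2})\to U_q(\widehat{sl_2})$ of Proposition \ref{iso} (with $\epsilon_\pm = 0$, whose value is irrelevant to (\ref{qDG})). The central compatibility to check on generators is
$$(\id\otimes\phi)\circ\delta = \Delta\circ\phi,$$
which reduces to the left-coideal identities $\Delta(A_j) = A_j\otimes 1 + K_j\otimes A_j$ and $\Delta(K_j) = K_j\otimes K_j$ in $U_q(\widehat{sl_2})\otimes U_q(\widehat{sl_2})$. Once these are established, (i) follows because $\Delta$ and $\phi$ are both algebra morphisms, hence so is $\Delta\circ\phi$, so it respects (\ref{qDG}), and the compatibility above transfers this property back to $\delta$. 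For (ii), the counit axiom is immediate from $\epsilon(A_j) = 0$ and $\epsilon(K_j) = 1$, giving $(\epsilon\otimes\id)\delta(\textsf W_j) = \textsf W_j$; coassociativity on each generator reduces, via the same coideal and group-like identities, to the equality of $\Delta(A_j)\otimes 1 + \Delta(K_j)\otimes \textsf W_j$ with $A_j\otimes 1\otimes 1 + K_j\otimes\delta(\textsf W_j)$, and multiplicativity then extends both axioms to the whole algebra.

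The main obstacle is the coideal identity $\Delta(A_j) = A_j\otimes 1 + K_j\otimes A_j$ itself: under the Drinfeld--Jimbo coproduct, neither $\Delta(E_1q^{H_1/2})$ nor $\Delta(F_1q^{H_1/2})$ individually has this form, since each carries a tail involving $q^{\pm H_1/2}$ in the wrong tensor slot, so the result relies on a precise cancellation between them in the weighted sum $A_0 = k_+E_1q^{H_1/2}+k_-F_1q^{H_1/2}$ (and symmetrically for $A_1$). Verifying this cancellation via a careful expansion of $\Delta(E_i), \Delta(F_i), \Delta(q^{H_i/2})$ in the Hopf-algebra convention of Appendix A is the computational core of the proof. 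Should the reduction through $\phi$ prove inconvenient, (i) can instead be verified by direct expansion of the triple $q$-commutator of $\delta(\textsf W_0)$ with $\delta(\textsf W_1)$ in the tensor product; that expansion collapses using the affine Cartan relations $q^{H_i}E_j = q^{a_{ij}}E_jq^{H_i}$ (with $a_{01} = a_{10} = -2$), the $q$-Serre relations in $U_q(\widehat{sl_2})$, and the $q$-Dolan--Grady relations already satisfied by $\textsf W_0,\textsf W_1$ in the second tensor factor, with the matching constant $\rho = (q+q^{-1})^2k_+k_-$ supplied by (\ref{rho}).
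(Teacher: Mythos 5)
The paper does not actually prove this proposition --- it is quoted from \cite{BSh1} with no argument given --- so there is no internal proof to compare against; I can only assess your proposal on its own terms. Your overall strategy (establish that $K_j$ is group-like and $A_j$ is twisted-primitive, then derive the algebra-morphism property, coassociativity and the counit axiom from these two identities together with the embedding $\phi$ of Proposition \ref{iso}) is the standard and correct route. Two points, however, need attention.

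First, the identity you designate as the computational core, $\Delta(A_j)=A_j\otimes 1+K_j\otimes A_j$, is \emph{false} for the coproduct (\ref{coprod}) as written in Appendix \ref{A1}. With $\Delta(E_1)=E_1\otimes q^{H_1/2}+q^{-H_1/2}\otimes E_1$ and $\Delta(q^{H_1/2})=q^{H_1/2}\otimes q^{H_1/2}$ one finds $\Delta(E_1q^{H_1/2})=E_1q^{H_1/2}\otimes q^{H_1}+1\otimes E_1q^{H_1/2}$, and identically for $F_1q^{H_1/2}$; hence $\Delta(A_0)=A_0\otimes K_0+1\otimes A_0$, the \emph{opposite} orientation to the one the stated coaction requires. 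Moreover the mechanism you anticipate --- a ``precise cancellation'' between the $E$- and $F$-summands in the weighted sum $A_0=k_+E_1q^{H_1/2}+k_-F_1q^{H_1/2}$ --- does not occur: each summand is individually twisted-primitive (in the wrong orientation), and no cancellation can flip the tensor legs. So if you carry out the expansion literally as planned, you will conclude that the key identity fails. The resolution is a convention issue: the left-coaction form (\ref{deltadef}) is compatible with the opposite coproduct $\Delta^{\mathrm{op}}$, or equivalently with a coproduct convention of the form $\Delta(E_i)=E_i\otimes q^{-H_i/2}+q^{H_i/2}\otimes E_i$; your write-up must either adopt that convention explicitly or prove the right-comodule version $\textsf{W}_0\mapsto 1\otimes A_0+\textsf{W}_0\otimes K_0$ and flip. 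As stated, the proof would stall at exactly the step you identify as central.

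Second, a smaller point: deducing that $\delta(\textsf{W}_0),\delta(\textsf{W}_1)$ satisfy (\ref{qDG}) in $U_{q}(\widehat{sl_2})\otimes O_{q}(\widehat{sl_2})$ from the fact that $\Delta\circ\phi$ is an algebra morphism requires that $\mathrm{id}\otimes\phi$ be injective, i.e.\ that $\phi$ be an embedding (this is the isomorphism result of \cite{Ko} that the paper invokes in a footnote); you should say so, or else fall back on your alternative of expanding the triple $q$-commutators directly in the tensor product, which avoids the issue entirely.
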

\begin{rem} Considering the embedding of the $q-$Onsager algebra into  $U_{q}(\widehat{sl_2})$ of Proposition \ref{iso}, the coaction map is identified with the coproduct of  $U_{q}(\widehat{sl_2})$ (see Appendix \ref{A1}). 
\end{rem}

Finite dimensional (evaluation) representations of the $q-$Onsager algebra are now used to construct two families of mutually commuting operators. Let us denote $\{S_\pm,q^{s_3}\}$ as the generators of the algebra $U_q(sl_2)$ with defining relations (\ref{Uqsl2}). From Appendix \ref{A2} and Proposition \ref{coac}, it follows: 
\begin{prop}[See \cite{Bas3}]
\label{hom}
Let $\{k_+,k_-,\epsilon_\pm\}$ be complex scalars. Let $\{v_i|i=1,2,...,N\}$ denote the nonzero evaluation parameters. Define:
\beqa
{\cal W}_{0}^{(i)}&=&
\left(k_+v_iq^{1/2}S_+q^{s_3}
+k_-v_i^{-1}q^{-1/2}S_-q^{s_3}\right)\otimes I\!\!I^{(i-1)}\ +\ q^{2s_3}\otimes {\cal W}_{0}^{(i-1)}
\ , \label{defWloopN}\\
%%%%%%%%%%%%%%%%%%%%%%%%%%%
{\cal W}_{1}^{(i)}&=&
\left(k_+v^{-1}_iq^{-1/2}S_+q^{-s_3} + k_-v_iq^{1/2}S_-q^{-s_3}\right)\otimes I\!\!I^{(i-1)}\ +\ q^{-2s_3}\otimes {\cal W}_{1}^{(i-1)}
\ \nonumber
\eeqa
with \ ${{\cal W}}_{0}^{(0)}\equiv \epsilon_+\ , {{\cal W}}_{1}^{(0)}\equiv \epsilon_-$. For any $i=0,1,2...,N$, one has the homomorphism:
\beqa
{\textsf W}_0  \mapsto {\cal W}_{0}^{(i)}\ ,\qquad{\textsf W}_1  \mapsto  {\cal W}_{1}^{(i)}\nonumber
\eeqa
with (\ref{rho}).
\end{prop}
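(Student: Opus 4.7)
The plan is to establish the statement by induction on $i$, using the comodule algebra structure of Proposition~\ref{coac} together with the evaluation representations recalled in Appendix~\ref{A2}. The key abstract principle is that if $\rho:O_q(\widehat{sl_2})\to B$ is an algebra homomorphism and $\pi:U_q(\widehat{sl_2})\to A$ is an algebra homomorphism, then the composition $(\pi\otimes \rho)\circ \delta : O_q(\widehat{sl_2})\to A\otimes B$ is again an algebra homomorphism, since $\delta$ is, by the definition of a left comodule algebra, an algebra homomorphism into the tensor product algebra $U_q(\widehat{sl_2})\otimes O_q(\widehat{sl_2})$.

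For the base case $i=0$, $\cW_0^{(0)}=\epsilon_+$ and $\cW_1^{(0)}=\epsilon_-$ are scalars, so every nested commutator in (\ref{qDG}) vanishes and ${\textsf W}_k\mapsto \epsilon_\pm$ trivially defines a homomorphism. For the inductive step, assume the statement holds for $i-1$, and let $\pi_{v_i}:U_q(\widehat{sl_2})\to U_q(sl_2)$ denote the evaluation homomorphism at $v_i$ from Appendix~\ref{A2}. By the general principle above, $\rho_i:=(\pi_{v_i}\otimes \rho_{i-1})\circ \delta$ is an algebra homomorphism, and in particular its images automatically satisfy (\ref{qDG}) with the same $\rho$ as in (\ref{rho}). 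Applying $\pi_{v_i}$ to the right hand sides of (\ref{deltadef}) and using the evaluation formulas $E_1 q^{H_1/2}\mapsto v_i q^{1/2}S_+ q^{s_3}$, $F_1 q^{H_1/2}\mapsto v_i^{-1}q^{-1/2}S_- q^{s_3}$, $q^{H_1}\mapsto q^{2s_3}$, together with the analogues for index $0$ (which carry $q^{H_0}\mapsto q^{-2s_3}$ and interchange $v_i\leftrightarrow v_i^{-1}$), reproduces exactly the recursive expressions (\ref{defWloopN}). Hence $\rho_i({\textsf W}_k)=\cW_k^{(i)}$, closing the induction.

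The main obstacle is purely bookkeeping: one must verify that the evaluation morphism acts on the combinations $E_k q^{H_k/2}$, $F_k q^{H_k/2}$, $q^{H_k}$ for $k=0,1$ exactly as stated, so that the sign change $s_3\mapsto -s_3$ between the two lines of (\ref{defWloopN}) and the swap of $v_i$ and $v_i^{-1}$ both emerge correctly from the conventions of Appendix~\ref{A2}. No direct verification of (\ref{qDG}) is required, since those relations are automatically transferred through any algebra homomorphism whose source is $O_q(\widehat{sl_2})$; the nontrivial content of the proposition lies entirely in identifying the image of $\delta$ under the evaluation representations.
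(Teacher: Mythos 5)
Your argument is correct and is essentially the paper's own (the paper simply asserts that the proposition "follows from Appendix A.2 and Proposition 2.3", i.e.\ from iterating the coaction map $\delta$ composed with the evaluation homomorphism $\pi_{v_i}$, which is exactly the induction you spell out). One small wording point: in the base case the inner $q$-commutators of two scalars such as $[\epsilon_+,\epsilon_-]_q=(q-q^{-1})\epsilon_+\epsilon_-$ do not vanish individually --- it is the outermost ordinary commutator that annihilates the left-hand side, while the right-hand side vanishes because $[\epsilon_+,\epsilon_-]=0$ --- but your conclusion stands.
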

\begin{lem} The operators $\cW_0^{(i)}$ (resp. $\cW_1^{(i)}$) are mutually commuting:
\beqa
 \big[\cW_0^{(i)}, \cW_0^{(j)}\big]=0 \qquad \mbox{and}\qquad   \big[\cW_1^{(i)}, \cW_1^{(j)}\big]=0 \quad \mbox{for any}\quad i,j=1,2,...,N.
\eeqa
\end{lem}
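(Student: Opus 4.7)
The plan is to proceed directly from the recursive definition (\ref{defWloopN}) rather than invoking abstract comodule machinery. First I would fix the convention that $\cW_0^{(i)}$ acts on the $i$-fold tensor product $V_i \otimes V_{i-1} \otimes \cdots \otimes V_1$, the left-most tensorand being the factor introduced at stage $i$, and that for $j<i$ the operator $\cW_0^{(j)}$ is embedded into this same space as $I\!\!I^{(i-j)} \otimes \cW_0^{(j)}$, acting non-trivially only on the last $j$ factors.

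Next I would iterate (\ref{defWloopN}) from level $i$ down to level $j$ to obtain the telescoping expansion
\begin{equation*}
\cW_0^{(i)} = \sum_{k=0}^{i-j-1} (q^{2s_3})^{\otimes k} \otimes A_{i-k} \otimes I\!\!I^{(i-k-1)} + (q^{2s_3})^{\otimes (i-j)} \otimes \cW_0^{(j)},
\end{equation*}
where $A_m \equiv k_+ v_m q^{1/2} S_+ q^{s_3} + k_- v_m^{-1} q^{-1/2} S_- q^{s_3}$ denotes the local part on the $m$-th factor. The commutator $[\cW_0^{(i)}, I\!\!I^{(i-j)} \otimes \cW_0^{(j)}]$ then follows by inspection: for $0\le k \le i-j-1$ one has $i-k-1\ge j$, so each summand in the sum acts as the identity on the last $j$ tensorands and therefore commutes with $I\!\!I^{(i-j)} \otimes \cW_0^{(j)}$ by disjoint supports; the remaining summand $(q^{2s_3})^{\otimes(i-j)} \otimes \cW_0^{(j)}$ commutes trivially, since $\cW_0^{(j)}$ commutes with itself. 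The $\cW_1^{(i)}$ case will be handled identically, with $q^{s_3}, q^{2s_3}$ replaced by $q^{-s_3}, q^{-2s_3}$ and the $v_i^{\pm 1}$-dependence exchanged according to (\ref{defWloopN}).

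I do not anticipate a genuine obstacle: this is at heart a telescoping, disjoint-support argument. The only delicate point is fixing the embedding convention so that the iterated recursion produces $\cW_0^{(j)}$ sitting at the right end with identities in front of it; once that is in place the calculation is mechanical. Conceptually the commutativity reflects the coassociativity of the iterated coaction $\delta$ of Proposition~\ref{coac}: the operators $\cW_0^{(1)}, \ldots, \cW_0^{(N)}$ correspond to a nested tower of coideal embeddings, and their mutual commutativity is a structural consequence of that nesting combined with the tautology $[\cW_0^{(j)}, \cW_0^{(j)}]=0$.
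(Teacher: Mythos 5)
Your proof is correct and is essentially the paper's own argument made explicit: the paper disposes of this lemma with the single line ``By induction, use (\ref{defWloopN})'', and your telescoping expansion of $\cW_0^{(i)}$ down to level $j$, followed by the disjoint-support observation for the terms with $k\le i-j-1$ and the tautological commutation of the tail $(q^{2s_3})^{\otimes(i-j)}\otimes\cW_0^{(j)}$ with $I\!\!I^{(i-j)}\otimes\cW_0^{(j)}$, is exactly the content of that induction. Your care in fixing the embedding convention (new factors adjoined on the left, $\cW_0^{(j)}$ sitting on the rightmost $j$ tensorands) is the right delicate point to flag and matches the paper's conventions.
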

\begin{proof}
By induction, use (\ref{defWloopN}).
\end{proof}

Finite dimensional tensor product representations of the $q-$Onsager algebra can be realized by $q-$difference operators acting in the linear space of multivariable polynomials of total degree $2(j_1+j_2+...+j_N)$, $j_i\in \frac{1}{2}{\mathbb N}$, in the variables $z_1,z_2,...,z_N$. From Appendix \ref{A3} and Proposition \ref{coac}, the following proposition is obtained:
\begin{prop}\label{realqdiff} On $\mathcal{P}^{(N)}_z$, the operators $\cW_0^{(i)}$ and $\cW_1^{(i)}$ act as $q-$difference operators of the form:
\beqa
 {\cal W}_{0}^{(i)}&\mapsto& \sum_{k=1}^{i} q^{-2(j_i+j_{i-1}+...+j_{k+1})} {T_+^{(i)}}^2{T_+^{(i-1)}}^2\cdots {T_+^{(k+1)}}^2 \left( b_0^{(k)} z_k (q^{2j_k} -q^{-2j_k} {T_+^{(k)}}^2)+ c_0^{(k)}  z_k^{-1}(1- {T_+^{(k)}}^2)\right)\nonumber\\
&&\qquad \qquad  \qquad\ + \ \epsilon_+ q^{-2(j_i+j_{i-1}+...+j_{1})} {T_+^{(i)}}^2{T_+^{(i-1)}}^2\cdots {T_+^{(1)}}^2 ,\label{qdiffW0gen}
\eeqa
and similarly for ${\cal W}_{1}^{(i)}$ with the substitution $q\rightarrow q^{-1},b^{(k)}_0 \rightarrow  - b^{(k)}_1 , c^{(k)}_0\rightarrow -c^{(k)}_1$, $\epsilon_+\rightarrow \epsilon_-$ and  \ $T_+^{(k)}\rightarrow T_-^{(k)}$, where:
\beqa
b_0^{(i)} &=& \frac{k_+v_iq^{1/2-j_i}}{(q-q^{-1})}, \qquad c_0^{(i)} = - \frac{k_-v_i^{-1}q^{-1/2-j_i}}{(q-q^{-1})},\label{coeffbc}\\
b_1^{(i)} &=& \frac{k_+v_i^{-1}q^{-1/2+j_i}}{(q-q^{-1})}, \qquad c_1^{(i)} = - \frac{k_-v_iq^{1/2+j_i}}{(q-q^{-1})}.\nonumber
\eeqa
\end{prop}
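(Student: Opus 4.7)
The strategy is to combine the recursive coaction formula (\ref{defWloopN}) with the explicit $q$-difference realization of $U_q(sl_2)$ supplied by Appendix \ref{A3}, and then proceed by induction on $i$. Writing $\cW_0^{(i)} = A^{(i)} + q^{2s_3}\otimes \cW_0^{(i-1)}$ with $A^{(i)} = (k_+ v_i q^{1/2} S_+ q^{s_3} + k_- v_i^{-1} q^{-1/2} S_- q^{s_3})\otimes I\!I^{(i-1)}$, the operator $A^{(i)}$ acts only on the variable $z_i$ while $q^{2s_3}\otimes 1$ merely pre-multiplies the previous expression by $q^{2s_3}$ in the $i$-th tensor slot. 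The base case $\cW_0^{(0)}=\epsilon_+$ is immediate; for $i=1$ one substitutes the realization directly so that $q^{s_3}$ becomes a scalar multiple of $T_+^{(1)}$, and $S_+$, $S_-$ produce respectively the terms proportional to $z_1(q^{2j_1}-q^{-2j_1}T_+^{(1)2})$ and $z_1^{-1}(1-T_+^{(1)2})$. Collecting the constants $k_\pm v_1^{\pm 1}q^{\pm 1/2}$ together with those arising from the realization yields $b_0^{(1)}$ and $c_0^{(1)}$ as in (\ref{coeffbc}), and the $\epsilon_+ q^{-2j_1}T_+^{(1)2}$ contribution comes from $q^{2s_3}\otimes\epsilon_+$.

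For the inductive step, I would apply $q^{2s_3}$, realized as the scalar multiple $q^{-2j_i}T_+^{(i)2}$, to the expression for $\cW_0^{(i-1)}$ assumed by the induction hypothesis. Since this operator acts only on $z_i$, it commutes past every shift and multiplication operator acting on $z_1,\dots,z_{i-1}$, so it simply prepends a factor $q^{-2j_i}T_+^{(i)2}$ to each of the $i-1$ summands indexed by $k=1,\dots,i-1$ as well as to the $\epsilon_+$ term. This promotes the prefactor $q^{-2(j_{i-1}+\cdots+j_{k+1})}T_+^{(i-1)2}\cdots T_+^{(k+1)2}$ to $q^{-2(j_i+j_{i-1}+\cdots+j_{k+1})}T_+^{(i)2}T_+^{(i-1)2}\cdots T_+^{(k+1)2}$, reproducing the $k\leq i-1$ summands of (\ref{qdiffW0gen}); meanwhile $A^{(i)}$ contributes precisely the $k=i$ summand, whose prefactor is the empty product.

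For $\cW_1^{(i)}$ I would repeat the induction verbatim, noting that relative to $\cW_0^{(i)}$ the recursion (\ref{defWloopN}) interchanges the roles of $k_+$ and $k_-$ on the $S_\pm$ terms, inverts $v_i$, and replaces $q^{s_3}\to q^{-s_3}$ and $q^{2s_3}\to q^{-2s_3}$. Under the realization of Appendix \ref{A3} this translates into the announced substitution $q\to q^{-1}$, $b_0^{(k)}\to -b_1^{(k)}$, $c_0^{(k)}\to -c_1^{(k)}$, $\epsilon_+\to\epsilon_-$, $T_+^{(k)}\to T_-^{(k)}$; the explicit sign flip in $b_1^{(k)},c_1^{(k)}$ accounts precisely for the $q^{\pm 1/2-j_i}\leftrightarrow q^{\mp 1/2+j_i}$ mismatch in (\ref{coeffbc}) once $q\to q^{-1}$ is enforced. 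The main obstacle, and indeed the only delicate part, is purely bookkeeping: one must verify that iterating the multiplication by $q^{-2j_i}T_+^{(i)2}$ telescopes cleanly both in the scalar exponent $-2(j_i+\cdots+j_{k+1})$ and in the ordered product $T_+^{(i)2}\cdots T_+^{(k+1)2}$, and that the analogous telescoping for $\cW_1^{(i)}$ rebuilds $b_1^{(k)},c_1^{(k)}$ exactly as defined in (\ref{coeffbc}).
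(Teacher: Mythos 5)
Your proposal is correct and follows exactly the route the paper intends: the paper offers no separate proof, merely asserting that the result follows from the realization of Appendix~\ref{A3} combined with the recursive structure (\ref{defWloopN}), which is precisely the substitution-plus-induction you carry out (and your verification of $b_0^{(i)}, c_0^{(i)}$ via $S_\pm q^{s_3}$ and of the telescoping prefactor $q^{-2j_i}T_+^{(i)2}$ checks out). The only quibble is descriptive: in (\ref{defWloopN}) the passage from $\cW_0^{(i)}$ to $\cW_1^{(i)}$ keeps $k_\pm$ attached to $S_\pm$ and instead swaps the factors $v_iq^{1/2}\leftrightarrow v_i^{-1}q^{-1/2}$, not the roles of $k_+$ and $k_-$; this does not affect your final substitution rules, which are verified correctly.
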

\vspace{1mm}

Note that the multivariable $q-$difference operator realization of the $q-$Onsager algebra of Proposition \ref{realqdiff} is not the same as the one recently proposed in \cite[Proposition 2.3]{BM}.

\vspace{3mm}

\section{Two `dual' polynomial eigenbases}

In this section, two `dual' multivariable polynomial bases for $\mathcal{P}^{(N)}_z$ are explicitly constructed. In a first part, it is shown that the first (resp. second) multivariable $q-$difference operators given in Proposition \ref{realqdiff}  are simultaneously diagonalized by any of the basic multivariable polynomial eigenvectors from the first (resp. second) basis, see Propositions \ref{eigenF}, \ref{eigenF2} and Lemma \ref{lem-basis}.  In a second part, we study the action of the second set of $q-$difference operators on the eigenbasis of the first set, see Proposition \ref{proptrid}. The cases $N=1$ and $N=2$ are described in details.  \vspace{1mm}

Basically, for any positive integer $N$   we first solve the following spectral problem\footnote{The analysis presented here can be understood as a multivariable generalization of the analysis in \cite[Subsection 3.3.2]{WZ} and \cite[Remark 2.1]{Ro2}. For the special case $N=1$, one recovers the results of \cite{WZ,Ro2}.} on  $\mathcal{P}^{(N)}_z$:
\beqa
{\cal W}_{0}^{(i)} F^{(N)}_{\{n\}} (z_1,z_2,...,z_N)  &=&  \lambda_{\{n\}}^{(i)}F^{(N)}_{\{n\}} (z_1,z_2,...,z_N)\label{specN},\\
{\cal W}_{1}^{(i)} \tilde{F}^{(N)}_{\{\tilde{n}\}} (z_1,z_2,...,z_N)  &=&  \tilde{\lambda}_{\{\tilde{n}\}}^{(i)} \tilde{F}^{(N)}_{\{\tilde{n}\}} (z_1,z_2,...,z_N)\qquad \mbox{for}\qquad i=1,2,...,N,\label{specN2}
\eeqa
where $\{n\}=\{n_1,n_2,...,n_N\}$, $n_i\in \{0,1,...,2j_i\}$, and similarly for $\{\tilde{n}\}$.
\vspace{1mm} 

Without loss of generality and for further convenience,  let us introduce the parametrization:
\beqa
k_+=-\frac{(q-q^{-1})}{2}q^\eta,\quad  k_-=\frac{(q-q^{-1})}{2}q^{\eta'},\quad \epsilon_+ = q^{\frac{\eta+\eta'}{2}}\cosh\alpha,\quad  \epsilon_- = q^{\frac{\eta+\eta'}{2}}\cosh\alpha^* ,\label{param}
\eeqa
where $\eta,\eta',\alpha,\alpha^*$ are arbitrary complex scalars. We shall also use the notation $\kJ_i = j_1+j_2+...+j_i$ and $\kN_i = n_1+n_2+...+n_i$, $\kJ_0 =\kN_0 =0$.

\begin{prop}\label{eigenF} The solution of the spectral problem (\ref{specN}) is given by:
\beqa
\qquad F^{(N)}_{\{n\}}(z_1,z_2,...,z_N) &=& \prod_{i=1}^{N} f_{n_i}^{(i)}(z_i) \quad \mbox{with} \quad 
 f_{n_i}^{(i)}(z_i)=\prod_{k=0}^{2j_i-1-n_i}   (z_i - z_{+}^{(i)}q^{-2k})   \prod_{l=0}^{n_i-1} (z_i - z_{-}^{(i)}q^{-2l})\label{FN}
\eeqa
and
\beqa
\lambda_{n_1,n_2,...,n_i}^{(i)} &=& \frac{1}{2} q^{\frac{\eta+\eta'}{2}}\left(   e^\alpha q^{-2\kJ_i+2\kN_i} +  e^{-\alpha} q^{2\kJ_i-2\kN_i}\right)\quad \mbox{for} \quad i=1,2,...,N,\label{spec1}
\eeqa
where
\beqa
z_\pm^{(i)}  = - v_i^{-1}q^{-1/2 +j_i +\frac{\eta'-\eta}{2}}e^{\pm\alpha}q^{\mp 2(\kJ_{i-1} -\kN_{i-1})}\quad \mbox{for} \quad i=1,2,...,N.\label{zpmi}
\eeqa
\end{prop}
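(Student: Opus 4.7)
The plan is induction on $i$, exploiting the nested structure visible in the $q$-difference realization (\ref{qdiffW0gen}). First I would separate the $k=i$ term in the sum defining $\cW_0^{(i)}$ from the remaining $k=1,\ldots,i-1$ terms; the latter all carry a common factor $q^{-2j_i}(T_+^{(i)})^2$ and collapse back into $q^{-2j_i}(T_+^{(i)})^2\cW_0^{(i-1)}$. This yields the decomposition
\[
\cW_0^{(i)} \;=\; D_i \;+\; q^{-2j_i}(T_+^{(i)})^2\,\cW_0^{(i-1)}, \qquad D_i = b_0^{(i)} z_i\bigl(q^{2j_i}-q^{-2j_i}(T_+^{(i)})^2\bigr) + c_0^{(i)} z_i^{-1}\bigl(1-(T_+^{(i)})^2\bigr),
\]
so that $D_i$ acts only on $z_i$ while $\cW_0^{(i-1)}$ acts only on $z_1,\ldots,z_{i-1}$.

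Next I would apply this decomposition to the ansatz $F^{(i)}_{\{n\}} = F^{(i-1)}_{\{n\}}\, f^{(i)}_{n_i}(z_i)$. Although the parameters $z_\pm^{(i)}$ inside $f^{(i)}_{n_i}$ encode the earlier indices through the combination $\kJ_{i-1}-\kN_{i-1}$, the function $f^{(i)}_{n_i}(z_i)$ is a polynomial in $z_i$ alone and $F^{(i-1)}_{\{n\}}$ depends only on $z_1,\ldots,z_{i-1}$. Invoking the inductive hypothesis $\cW_0^{(i-1)} F^{(i-1)}_{\{n\}} = \lambda^{(i-1)}_{n_1,\ldots,n_{i-1}} F^{(i-1)}_{\{n\}}$, one reduces the eigenvalue equation to the univariate problem
\[
\Bigl(D_i + q^{-2j_i}\,\lambda^{(i-1)}_{n_1,\ldots,n_{i-1}}(T_+^{(i)})^2\Bigr) f^{(i)}_{n_i}(z_i) = \lambda^{(i)}_{n_1,\ldots,n_i}\, f^{(i)}_{n_i}(z_i).
\]

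Setting $A_i := q^{-2(\kJ_{i-1}-\kN_{i-1})}$, one reads off $z_\pm^{(i)} = -v_i^{-1}q^{-1/2+j_i+(\eta'-\eta)/2}\,e^{\pm\alpha}A_i^{\pm 1}$ together with $\lambda^{(i-1)}_{n_1,\dots,n_{i-1}} = \tfrac{1}{2}q^{(\eta+\eta')/2}(e^{\alpha}A_i+e^{-\alpha}A_i^{-1})$, so that the reduced problem is precisely the $N=1$ spectral problem with $e^{\alpha}$ effectively replaced by $e^{\alpha}A_i$. That univariate problem is the one solved in \cite{WZ,Ro2} and boils down to the polynomial identity obtained by computing $(T_+^{(i)})^2 f^{(i)}_{n_i}(z_i) = q^{4j_i}\prod_{k=1}^{2j_i-n_i}(z_i-z_+^{(i)}q^{-2k})\prod_{l=1}^{n_i}(z_i-z_-^{(i)}q^{-2l})$ and then matching the top and bottom coefficients of $z_i$ to determine $\lambda^{(i)}$; this produces exactly the formula (\ref{spec1}) with $\kJ_i,\kN_i$ in place of $\kJ_{i-1},\kN_{i-1}$. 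The base case $i=0$ is the scalar identity $\cW_0^{(0)} = \epsilon_+ = q^{(\eta+\eta')/2}\cosh\alpha = \lambda^{(0)}$.

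The main obstacle, and the conceptual heart of the statement, is the entanglement of the ansatz: $f^{(i)}_{n_i}$ is not a stand-alone one-variable eigenfunction but must carry a shift in the spectral parameter $\alpha$ that is dictated by the earlier occupation numbers $n_1,\ldots,n_{i-1}$. Identifying the correct shift is what the prescription (\ref{zpmi}) achieves, and once this is in place the induction step becomes a mechanical reduction to the univariate case; the specific eigenvalue (\ref{spec1}) then emerges by tracking these $A_i$-shifts through the recursion.
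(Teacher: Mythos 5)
Your proposal is correct and follows essentially the same route as the paper: the paper likewise uses the factorized ansatz $F^{(i)}=f(z_i)F^{(i-1)}$ together with the nested structure of $\cW_0^{(i)}$ to reduce, by induction, to the univariate ($N=1$) spectral problem with $\epsilon_+$ replaced by $\lambda^{(i-1)}_{n_1,\dots,n_{i-1}}$, which is exactly your observation that the effective spectral parameter is $e^{\alpha}A_i$. The only (inessential) difference is in the univariate step, where the paper derives the eigenfunctions by writing $f$ with unknown roots and solving the resulting Bethe-type equations for the roots of $a^{(1)}$, whereas you verify the explicit product formula directly by coefficient matching.
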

\begin{proof}
Consider (\ref{specN}) for the case $N=1$. The corresponding spectral problem is of the form $\cW_0^{(1)} f(z_1)= \lambda f(z_1)$ where $\lambda$ is a scalar. According to Proposition \ref{realqdiff}, it leads to a first-order $q-$difference equation with respect to the shift $z_1\rightarrow q^{2}z_1$:
\beqa
a^{(1)}(z_1) f(q^2z_1)    + (u^{(1)}(z_1) -\lambda) f(z_1) =0\label{TQ1}
\eeqa
where the Laurent polynomials $a^{(1)}(z_1),u^{(1)}(z_1) $ are respectively given by:
\beqa
a^{(1)}(z_1)  = -b_0^{(1)}q^{-2j_1} z_1 - c_0^{(1)}z_1^{-1} + \epsilon_+q^{-2j_1}, \quad u^{(1)}(z_1) = b_0^{(1)} q^{2j_1}z_1 +   c_0^{(1)}z_1^{-1}.\nonumber
\eeqa
Assume $f(z_1)$ is a polynomial of maximal degree $2j_1$, factorized as:
\beqa
f(z_1)=\prod_{k=1}^{2j_1}(z_1-\xi_k).\label{solf}
\eeqa
Inserting (\ref{solf}) into (\ref{TQ1}), one finds that the roots $\{\xi_k|k=1,2,...,2j_1\}$ of the polynomial must satisfy\footnote{In the `physics' literature, this set of equations is often called `Bethe equations'.}:
\beqa
a^{(1)}(\xi_l)\prod_{k=1,k\neq l}^{2j_1}(q^2\xi_l-\xi_k)=0, \qquad l=1,2,...,2j_1.
\eeqa
Let $z_\pm$ denote the two roots of the Laurent polynomial $a^{(1)}(z_1)$. Then, there are exactly $2j_1+1$  polynomial eigenfunctions of the form (\ref{solf}) that can be constructed and they are given by (\ref{FN}) for $N=1$. Finally, substituting $f_{n_1}^{(1)}(z_1)$ into (\ref{TQ1}) and equating the constant terms, one finds:
\beqa
\lambda_{n_1}^{(1)} &=& b_0^{(1)}\left(z_{+}^{(1)} q^{-2j_1+2n_1} + z_{-}^{(1)} q^{2j_1-2n_1}\right).\nonumber
\eeqa 
Using the parametrization (\ref{param}), one arrives at (\ref{spec1}) for $N=1$. Note that the arguments presented here can be found in e.g.\cite[Subsection 3.3.2]{WZ}. \vspace{1mm}

Consider the case $N=2$ in (\ref{specN}). Take an eigenfunction of the factorized form $F^{(2)}_{n_1,n_2}(z_1,z_2)=f(z_2)F^{(1)}_{n_1}(z_1)$.
It follows from  (\ref{qdiffW0gen}) that $f(z_2)$ has to solve the auxiliary spectral problem:
\beqa
\cW_0^{(1)}|_{\epsilon_+\rightarrow \lambda_{n_1}^{(1)},z_1\rightarrow z_2, j_1\rightarrow j_2,  b_0^{(1)}\rightarrow  b_0^{(2)},  c_0^{(1)}\rightarrow  c_0^{(2)}} f(z_2) = \lambda_{n_1,n_2}^{(2)}f(z_2).\nonumber
\eeqa
Applying the same analysis as for the case $N=1$, the claim follows for $N=2$. The proof of Proposition \ref{eigenF} for generic values of $N$ is then completed by an inductive argument.
\end{proof}

Either by using symmetry relations between $\cW_0^{(i)}$ and $\cW_1^{(i)}$ (see Proposition (\ref{realqdiff}) or by applying to the spectral problem (\ref{specN2}) an analysis similar to the one presented above, the second set of eigenfunctions is analogously derived:
\begin{prop}\label{eigenF2} The solution of the spectral problem (\ref{specN2}) is given by:
\beqa
\qquad \tilde{F}^{(N)}_{\{\tilde{n}\}}(z_1,z_2,...,z_N) = F^{(N)}_{\{n\}}(z_1,z_2,...,z_N)|_{n\rightarrow \tilde{n},q\rightarrow q^{-1},z_\pm^{(i)}\rightarrow \tilde{z}_\pm^{(i)}, i=1,2,...,N}\nonumber
\eeqa
and
\beqa
\tilde{\lambda}_{\tilde{n}_1,\tilde{n}_2,...,\tilde{n}_i}^{(i)} &=& \frac{1}{2} q^{\frac{\eta+\eta'}{2}}\left(  e^{-\alpha^*} q^{-2\kJ_i+2\tilde{\kN}_i} +  e^{\alpha^*} q^{2\kJ_i-2\tilde{\kN}_i}\right)\label{speclt}
\eeqa
where
\beqa
\tilde{z}_\pm^{(i)}  =  v_iq^{1/2 -j_i +\frac{\eta'-\eta}{2}}e^{\pm\alpha^*}q^{\pm 2(\kJ_{i-1}-\tilde{\kN}_{i-1})}.\label{ztildepmi}
\eeqa
\end{prop}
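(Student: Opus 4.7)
My strategy is to exploit the symmetry between the two families of operators displayed in Proposition~\ref{realqdiff}: the operator $\cW_1^{(i)}$ arises from $\cW_0^{(i)}$ by the simultaneous substitutions $q \mapsto q^{-1}$, $b_0^{(k)} \mapsto -b_1^{(k)}$, $c_0^{(k)} \mapsto -c_1^{(k)}$, $\epsilon_+ \mapsto \epsilon_-$ and $T_+^{(k)} \mapsto T_-^{(k)}$. Because the proof of Proposition~\ref{eigenF} is entirely algebraic, applying these substitutions throughout that argument automatically produces a proof for (\ref{specN2}); my task then reduces to verifying that the resulting formulas for the eigenfunctions and eigenvalues match (\ref{ztildepmi})--(\ref{speclt}).

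For the base case $N=1$, the substituted operator $\cW_1^{(1)}$ leads to a first-order $q$-difference equation $\tilde a^{(1)}(z_1)\,\tilde f(q^{-2}z_1) + (\tilde u^{(1)}(z_1) - \tilde\lambda)\,\tilde f(z_1) = 0$ obtained from (\ref{TQ1}) under the substitutions. A polynomial ansatz $\tilde f(z_1) = \prod_k (z_1 - \tilde\xi_k)$ forces the roots to lie on geometric progressions anchored at the two roots $\tilde z_\pm^{(1)}$ of the quadratic Laurent polynomial $\tilde a^{(1)}(z_1)$; a short calculation using (\ref{coeffbc}) and (\ref{param}) yields $\tilde z_\pm^{(1)} = v_1 q^{1/2-j_1+(\eta'-\eta)/2}\, e^{\pm\alpha^*}$, in agreement with (\ref{ztildepmi}) at $i=1$. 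Equating constant terms in the $q$-difference equation gives the eigenvalue formula (\ref{speclt}) at $i=1$; the inversion $q \mapsto q^{-1}$ is responsible for the exponents $q^{2k}$ (rather than $q^{-2k}$) appearing in the Pochhammer factors of $\tilde F^{(N)}_{\{\tilde n\}}$. For general $N$, I proceed by induction exactly as in Proposition~\ref{eigenF}: writing $\tilde F^{(N)}_{\{\tilde n\}} = \tilde f(z_N)\,\tilde F^{(N-1)}_{\{\tilde n\}}$, the recursive definition (\ref{defWloopN}) of $\cW_1^{(N)}$ reduces the spectral problem in $z_N$ to an auxiliary one-variable problem of the same shape as the $N=1$ case, with $\epsilon_-$ replaced by $\tilde\lambda_{\tilde n_1,\ldots,\tilde n_{N-1}}^{(N-1)}$, and the accumulated factors $q^{-2s_3}$ in the coaction generate the twist $q^{\pm 2(\kJ_{N-1}-\tilde\kN_{N-1})}$ in $\tilde z_\pm^{(N)}$.

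The main technical obstacle is bookkeeping. I need to verify carefully that, under the substitutions combined with the parametrization (\ref{param}), the prefactor $-v_i^{-1}$ in $z_\pm^{(i)}$ becomes $+v_i$ in $\tilde z_\pm^{(i)}$, and that the sign of the twist exponent $\pm 2(\kJ_{i-1}-\kN_{i-1})$ gets reversed in (\ref{ztildepmi}), the latter being a direct consequence of $q \mapsto q^{-1}$. Once this sign tracking is confirmed at $N=1$, the inductive step transports the relations cleanly to all $N$, and the eigenvalue formula (\ref{speclt}) follows from the same constant-term matching used in Proposition~\ref{eigenF}.
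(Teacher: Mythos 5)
Your proof is correct and takes essentially the same route as the paper, which obtains Proposition \ref{eigenF2} precisely by invoking the symmetry between $\cW_0^{(i)}$ and $\cW_1^{(i)}$ of Proposition \ref{realqdiff} (equivalently, by repeating the analysis of Proposition \ref{eigenF} with the substituted operators). Your explicit $N=1$ bookkeeping — the roots of the substituted quadratic Laurent polynomial giving $\tilde z_\pm^{(1)}=v_1q^{1/2-j_1+(\eta'-\eta)/2}e^{\pm\alpha^*}$, the reversed twist exponent from $q\mapsto q^{-1}$, and the inductive replacement $\epsilon_-\to\tilde\lambda^{(N-1)}$ — checks out against (\ref{ztildepmi}) and (\ref{speclt}).
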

\begin{rem} Note that the scalar parameter $\eta'$ can be removed. In total, there remains $N+3$ free scalar parameters: $\alpha,\alpha^*,\eta,v_1,v_2,...,v_N$.
\end{rem}

Observe that above eigenfunctions can be written in terms of $q-$Pochhammer symbols to obtain (\ref{Fpoc}). The following Lemma (and its proof) can be viewed as multivariate extensions of \cite[Lemma 3.1]{Ro2}.
\begin{lem}\label{lem-basis} Assume
\beqa
&&z_+^{(i)}/z_-^{(i)}\notin\{q^{4j_i-2},...,q^{-4j_i+2}\}, \quad z_\pm^{(i)}\neq 0,\quad z_\pm^{(i)}\neq z_\pm^{(j)},\label{cond1zz}\\
&&\tilde{z}_+^{(i)}/\tilde{z}_-^{(i)}\notin\{q^{4j_i-2},...,q^{-4j_i+2}\},\quad \tilde{z}_\pm^{(i)}\neq 0,\quad \tilde{z}_\pm^{(i)}\neq \tilde{z}_\pm^{(j)}\label{cond2zz}
\eeqa
for any $i\neq j$. The vector space $\mathcal{P}^{(N)}_z$ admits two `dual' bases.  The first basis is generated by the polynomials $\{F^{(N)}_{\{n\}}(z_1,z_2,...,z_N)\}$ and the second basis by $\{\tilde{F}^{(N)}_{\{\tilde{n}\}}(z_1,z_2,...,z_N)\}$. The cardinality is given by $\prod_{k=1}^N(2j_k+1)$.
\end{lem}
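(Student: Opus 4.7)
My strategy is to reduce both claims to linear independence of each family, then induct on $N$. The cardinality $\prod_{k=1}^{N}(2j_k+1)$ equals the dimension of $\mathcal{P}^{(N)}_z$ viewed as the space of polynomials of degree at most $2j_i$ in each $z_i$ (each $F^{(N)}_{\{n\}}$ has degree exactly $2j_i$ in $z_i$ by \eqref{FN}), so once linear independence is established the basis property follows. By Proposition~\ref{eigenF2} the tilde family is obtained from $\{F^{(N)}_{\{n\}}\}$ by the substitution $q\to q^{-1}$, $z^{(i)}_\pm\to\tilde z^{(i)}_\pm$, with hypothesis \eqref{cond2zz} playing the role of \eqref{cond1zz}; hence I concentrate on $\{F^{(N)}_{\{n\}}\}$ and appeal to the same argument for the dual family at the end.

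For the base case $N=1$, I would test the $2j_1+1$ polynomials $f^{(1)}_{n_1}$ at the sample points $\zeta_m := z_+^{(1)} q^{-2m}$ for $m=0,1,\dots,2j_1$. Reading off \eqref{FN}, the point $\zeta_m$ is a root of $f^{(1)}_{n_1}$ coming from the $z_+^{(1)}$-chain precisely when $m\le 2j_1-1-n_1$; the conditions $z^{(1)}_+/z^{(1)}_-\notin\{q^{4j_1-2},\ldots,q^{-4j_1+2}\}$ and $z^{(1)}_\pm\neq 0$ from \eqref{cond1zz} prevent any coincidence of $\zeta_m$ with a $z_-^{(1)}$-root and ensure the $\zeta_m$ are $2j_1+1$ distinct nonzero points. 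Ordering columns by decreasing $n_1$, the evaluation matrix $M_{m,n_1}:=f^{(1)}_{n_1}(\zeta_m)$ is lower-triangular with nonzero diagonal, which settles linear independence.

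For the inductive step, I would factorize $F^{(N+1)}_{\{n\},\,n_{N+1}}(z_1,\dots,z_{N+1}) \;=\; F^{(N)}_{\{n\}}(z_1,\dots,z_N)\cdot f^{(N+1)}_{n_{N+1}}(z_{N+1})$, stressing that by \eqref{zpmi} the $z_{N+1}$-factor depends on $\{n\}$ through its parameters $z_\pm^{(N+1)}$. Given a vanishing linear combination $\sum_{\{n\},n_{N+1}} c_{\{n\},n_{N+1}}\,F^{(N+1)}_{\{n\},n_{N+1}}=0$, I freeze $z_{N+1}$ to obtain $\sum_{\{n\}} A_{\{n\}}(z_{N+1})\,F^{(N)}_{\{n\}}=0$ with $A_{\{n\}}(z_{N+1}):=\sum_{n_{N+1}} c_{\{n\},n_{N+1}} f^{(N+1)}_{n_{N+1}}(z_{N+1})$. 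The inductive hypothesis then gives $A_{\{n\}}(z_{N+1})\equiv 0$ as a polynomial in $z_{N+1}$ for every $\{n\}$, and the base case applied with the $\{n\}$-dependent parameters $z_\pm^{(N+1)}$ forces all $c_{\{n\},n_{N+1}}=0$.

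The main obstacle I anticipate is precisely the $\{n\}$-dependence of $z_\pm^{(N+1)}$ through $\kJ_{N}-\kN_{N}$ in \eqref{zpmi}: unlike a genuine tensor-product basis, the univariate factor changes as $\{n\}$ varies, and so the non-resonance condition underlying the base case must be in force \emph{uniformly} for every admissible $\{n_1,\dots,n_{N}\}$ that occurs along the induction. I read the blanket hypotheses \eqref{cond1zz} (resp.\ \eqref{cond2zz}) as imposing exactly this uniformity; granting it, the induction closes and both families yield bases of $\mathcal{P}^{(N)}_z$.
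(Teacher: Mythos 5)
Your proof is correct and follows essentially the same route as the paper: both reduce the claim to linear independence via the cardinality/dimension count, handle $N=1$ by evaluating at a geometric chain of sample points (you use the $z_+^{(1)}$-chain organized as a triangular evaluation matrix, while the paper uses the $z_-^{(1)}$-chain with successive division by $(z_1-z_-^{(1)}q^{-2m})$ --- the same computation), and then induct on $N$, treating the dual family by the substitution $q\to q^{-1}$, $z_\pm^{(i)}\to\tilde z_\pm^{(i)}$. Your explicit tracking of the $\{n\}$-dependence of $z_\pm^{(N+1)}$ in the inductive step is in fact spelled out more carefully than in the paper, which only says to ``apply the same reasoning as before.''
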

\begin{proof} We study the conditions under which the polynomials $\mathcal{B}_z=\{
F_{\{n\}}^{(N)}(z_1, z_2,\cdots,z_N)\}$ form a basis of
$\mathcal{P}^{(N)}_z$. Let us start with $N=1$. Since the set of functions $\mathcal{B}_z$ has
a cardinality which coincides with the dimension of the linear space, we just have to
show:
\begin{equation}
\sum_{n_1=0}^{2j_1} \zeta_{n_1} F^{(1)}_{n_1}(z_1)=0\quad  \mbox{iff} \quad
\zeta_{n_1}=0 \quad \mbox{for all} \quad n_1.\label{lin} \end{equation}
Note that $F^{(1)}_{n_1}(z_1)$ is a polynomial of degree $2j_1$ in the variable $z_1$, so it is sufficient to
check the equation on $2j_1+1$ distinct values of $z_1$.
First, suppose that $z_\pm^{(1)}\neq 0$. Observe that all polynomials $F^{(1)}_{n_1}(z_1)$ have the common zero $z_1=z_-^{(1)}$ if $z_+^{(1)}/z_-^{(1)}=q^{2k}$ or $z_1=z_+^{(1)}$ if $z_+^{(1)}/z_-^{(1)}=q^{-2k}$ with $k=0,1,...,2j_1-1$. So, we assume that all conditions in (\ref{cond1zz}) are satisfied. Choosing $z_1=z_-^{(1)}$ in (\ref{lin}), one finds $\zeta_{0}F^{(1)}_{0}(z_1)=0$, which vanishes only for $\zeta_0=0$. Next, we divide the first equation in (\ref{lin}) by $(z_1-z_-^{(1)})$ and set $z_1=z_-^{(1)}q^{-2}$. Similarly, it implies $\zeta_{1}=0$. By induction, (\ref{lin}) follows. So the claim holds for $N=1$. We now turn to arbitrary values of $N$.
Suppose as per the statement of the Lemma that
 $z_\pm^{(i)}\neq z_\pm^{(j)}$ for any $i\neq j$. Apply the same reasoning as before.  By induction, it follows that $\{F^{(N)}_{\{n\}}(z_1,z_2,...,z_N)\}$ form a basis of $\mathcal{P}^{(N)}_z$ provided conditions (\ref{cond1zz}) are satisfied. Similarly, one shows that  $\{\tilde{F}^{(N)}_{\{\tilde{n}\}}(z_1,z_2,...,z_N)\}$ form a basis of the vector space under (\ref{cond2zz}).
\end{proof}
The action of ${\cal W}_{1}^{(i)}$ (resp. ${\cal W}_{0}^{(i)}$) on the eigenbasis of the $q-$difference operator ${\cal W}_{0}^{(i)}$ (resp. ${\cal W}_{1}^{(i)}$) is now considered. First, we describe the action of the operators for generic values of $N$.\vspace{1mm}

\begin{prop}\label{proptrid} Let $V_{i;p_i}$ (resp. $V^*_{i;\tilde{p}_i}$) denote the  subspace generated by the polynomials $F^{(N)}_{n_1 n_2...n_N}(z_1,z_2,...,z_N)$ (resp. $\tilde{F}^{(N)}_{\{\tilde{n}\}}(z_1,z_2,...,z_N)$) with fixed $p_i=n_1+n_2+...+n_i$  (resp. $\tilde{p}_i=\tilde{n}_1+\tilde{n}_2+...+\tilde{n}_i$) for any $i\in\{1,2,...,N\}$.  One has:
\beqa
\cW_{0}^{(i)}V_{i;p_i} &\subseteq& V_{i;p_i}, \label{diagW}\\
\cW_{1}^{(i)}V_{i;p_i} &\subseteq& V_{i;p_i+1} + V_{i;p_i} + V_{i;p_i-1} \qquad \qquad 0 \leq p_i \leq 2\kJ_i, \label{tridiagW}
\eeqa
where $V_{i;-1} = 0$ and $V_{i;2\kJ_i+1}= 0$.
\beqa
\cW_{1}^{(i)}V^*_{i;\tilde{p}_i} &\subseteq& V^*_{i;\tilde{p}_i}, \label{diagdualW}\\
\cW_{0}^{(i)}V^*_{i;\tilde{p}_i} &\subseteq& V^*_{i;\tilde{p}_i+1} + V^*_{i;\tilde{p}_i} + V^*_{i;\tilde{p}_i-1} \qquad \qquad 0 \leq \tilde{p}_i \leq 2\kJ_i, \label{tridiagdualW}
\eeqa
where $V^*_{i;-1} = 0$ and $V^*_{i;2\kJ_i+1}= 0$.
\end{prop}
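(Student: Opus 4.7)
The inclusions (\ref{diagW}) and (\ref{diagdualW}) are immediate consequences of Propositions \ref{eigenF} and \ref{eigenF2}: the eigenvalue $\lambda^{(i)}_{n_1,\ldots,n_i}$ displayed in (\ref{spec1}) depends on $(n_1,\ldots,n_i)$ only through the sum $\kN_i = n_1 + \cdots + n_i$, so every element of $V_{i;p_i}$ is an eigenvector of $\cW_0^{(i)}$ with eigenvalue $\lambda_{p_i}^{(i)}$ and $V_{i;p_i}$ is therefore preserved by this operator; the dual statement for $\cW_1^{(i)}$ and $V^*_{i;\tilde p_i}$ follows in the same way from (\ref{speclt}). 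It remains to establish the tridiagonal inclusions (\ref{tridiagW}) and (\ref{tridiagdualW}).

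For (\ref{tridiagW}), the plan is to combine the first $q$-Dolan--Grady relation in (\ref{qDG}), which holds for $(\cW_0^{(i)},\cW_1^{(i)})$ by Proposition \ref{hom}, with the diagonal action of $\cW_0^{(i)}$ just established. Setting $X = \cW_0^{(i)}$, $Y = \cW_1^{(i)}$ and expanding the nested $q$-commutators, this relation becomes
\begin{equation*}
X^3 Y - (q^2+1+q^{-2})\, X^2 Y X + (q^2+1+q^{-2})\, X Y X^2 - Y X^3 \;=\; \rho\,(XY - YX).
\end{equation*}
I would apply both sides to an arbitrary $F \in V_{i;p_i}$ and expand $Y F = \sum_{b=0}^{2\kJ_i} G_b$ along the direct sum decomposition $\mathcal{P}^{(N)}_z = \bigoplus_b V_{i;b}$ furnished by Lemma \ref{lem-basis}. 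Using $X F = \lambda_{p_i}^{(i)} F$ and $X G_b = \lambda_b^{(i)} G_b$, the identity reduces to
\begin{equation*}
\sum_{b=0}^{2\kJ_i} (\lambda_b^{(i)} - \lambda_{p_i}^{(i)})\bigl[(\lambda_b^{(i)})^2 - (q^2+q^{-2})\lambda_{p_i}^{(i)}\lambda_b^{(i)} + (\lambda_{p_i}^{(i)})^2 - \rho\bigr] G_b = 0.
\end{equation*}

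The crucial observation is that, in view of the $q$-quadratic form $\lambda_a^{(i)} = A\, q^{2a} + B\, q^{-2a}$ read off from (\ref{spec1}) (with $A,B$ depending on $\alpha,\eta,\eta',\kJ_i$), and the explicit value $\rho = -(q^2-q^{-2})^2 A B$ obtained by inserting the parametrization (\ref{param}) into (\ref{rho}), the bracketed quadratic in $\lambda_b^{(i)}$ factorizes as $(\lambda_b^{(i)} - \lambda_{p_i+1}^{(i)})(\lambda_b^{(i)} - \lambda_{p_i-1}^{(i)})$. Together with the outer factor $(\lambda_b^{(i)} - \lambda_{p_i}^{(i)})$, the coefficient of $G_b$ vanishes precisely when $b \in \{p_i-1,\,p_i,\,p_i+1\}$, provided the eigenvalues $\lambda_0^{(i)}, \ldots, \lambda_{2\kJ_i}^{(i)}$ are pairwise distinct (which holds generically in $\alpha$). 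Linear independence of the subspaces $V_{i;b}$ then forces $G_b = 0$ for every other $b$, yielding (\ref{tridiagW}). The assertions (\ref{diagdualW}) and (\ref{tridiagdualW}) follow by the very same argument applied to the second relation in (\ref{qDG}), using Proposition \ref{eigenF2} and the dual basis supplied by Lemma \ref{lem-basis}.

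The main technical point is the factorization of the quadratic bracket; once the two elementary identities $\lambda_{a+1}^{(i)} + \lambda_{a-1}^{(i)} = (q^2+q^{-2})\,\lambda_a^{(i)}$ and $\lambda_{a+1}^{(i)}\,\lambda_{a-1}^{(i)} = (\lambda_a^{(i)})^2 - \rho$ have been checked, everything else is automatic and the argument is just the tridiagonal-pair mechanism of Terwilliger specialized to a $q$-Racah spectrum.
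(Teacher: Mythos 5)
Your proof is correct and follows essentially the same route as the paper: both rest on the diagonal action from Propositions \ref{eigenF}, \ref{eigenF2} together with the $q$-Dolan--Grady relations, reducing to the vanishing of the cubic $(\lambda_b-\lambda_{p})(\lambda_b^2-(q^2+q^{-2})\lambda_p\lambda_b+\lambda_p^2-\rho)$ exactly for $|b-p|\le 1$ (the paper phrases this with projectors $E_{i;p_i}$ rather than by decomposing $\cW_1^{(i)}F$ along $\bigoplus_b V_{i;b}$, which is only a cosmetic difference). Your explicit factorization of the quadratic bracket as $(\lambda_b-\lambda_{p+1})(\lambda_b-\lambda_{p-1})$ and the genericity remark on distinctness of the $\lambda_b^{(i)}$ simply make precise the step the paper calls ``straightforward to show.''
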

\begin{proof} The  following arguments are essentially based on \cite[Proof of Theorem 3.10]{Ter03}. By Proposition \ref{eigenF}, (\ref{diagW}) holds. We now demonstrate (\ref{tridiagW}). By Proposition \ref{hom}, recall that $\cW_{0}^{(i)},\cW_{1}^{(i)}$ satisfy the defining relations of the $q-$Onsager algebra (\ref{qDG}). Take $E_{i;p_i}$ as the projector onto the eigenspace $V_{i;p_i}$ associated with the eigenvalue $\lambda^{(i)}_{p_i}$ of $\cW_{0}^{(i)}$. Let $\Delta$ denote the difference between the left-hand side and the  right-hand side  of the first equation in (\ref{qDG}), so that this equation reads $\Delta=0$. One has $E_{i;p_i} \Delta E_{i;m_i} = P(\lambda^{(i)}_{p_i},\lambda^{(i)}_{m_i}) \ E_{i;p_i} \cW_1^{(i)} E_{i;m_i}$ with
\beqa
P(x,y) = (x-y) (x^2 - (q^2+q^{-2}) xy +y^2 -\rho)\  .
\eeqa
For each pair of integers $p_i,m_i$ it is straightforward to show from (\ref{spec1}) that $P(\lambda^{(i)}_{p_i},\lambda^{(i)}_{m_i})=0$ if  $|p_i-m_i|\leq 1$. It implies:
\beqa
E_{i;p_i} \cW_1^{(i)} E_{i;m_i} =0  \quad \mbox{if}\quad |p_i-m_i|>1
\eeqa
which proves the claim (\ref{tridiagW}). The statements (\ref{diagdualW}),(\ref{tridiagdualW}) are shown similarly.
\end{proof}

We would like to point out some connection with the theory of tridiagonal pairs.
By \cite[Definition 2.1]{Ter01}, a tridiagonal pair of $q-$Racah type is such that (i) both operators are diagonalizable; (ii) the two operators act as (\ref{diagW}), (\ref{tridiagW}), (\ref{diagdualW}), (\ref{tridiagdualW}) on the respective eigenspaces and (iii) the vector space is irreducible. From Propositions \ref{eigenF}, \ref{eigenF2} and Lemma \ref{lem-basis}, it follows that (i) holds. By Proposition \ref{proptrid}, (ii) is then verified. If in addition we assume that the vector space is irreducible,  for any $i=0,1,...,N$ it follows that the $q-$difference operators $\cW_{0}^{(i)}$ and $\cW_{1}^{(i)}$ form a tridiagonal pair of $q-$Racah type. 
\vspace{2mm}

To illustrate Proposition  \ref{proptrid},
we shall describe below the cases $N=1$ and $N=2$ in some details.
\begin{example}\label{ex1}
 The $q-$difference operator $\cW_1^{(1)}$  acts  on the  polynomial eigenfunction $F^{(1)}_{n_1}(z_1)$ given by (\ref{FN}) as follows:
\beqa
\cW_1^{(1)}F^{(1)}_{n_1}(z_1)={\cal B}_{n_1}^{[1]}F^{(1)}_{n_1+1}(z_1) + {\cal C}_{n_1}^{[-1]} F^{(1)}_{n_1-1}(z_1) + {\cal A}_{n_1}^{[0]}F^{(1)}_{n_1}(z_1),\label{recW1}
\eeqa
where
\beqa
{\cal B}_{n_1}^{[1]} &=& -b_1^{(1)}q^{2j_1} \tilde{z}_+^{(1)} \frac{(1-q^{2(1+n_1-2j_1)}z_+^{(1)}/\tilde{z}_-^{(1)})}{(1-q^{-2n_1}z_-^{(1)}/\tilde{z}_-^{(1)})}
%\frac{(q^{2(1-n_1)}a/c;q^2)_{n_1}}{(q^{-2n_1}a/c;q^2)_{n_1+1}}\frac{(q^{2(1-M)}b/c;q^2)_{n_1+1}}{(q^{2(1-M)}b/c;q^2)_{n_1}}
B_{n_1}(q^{-{4j_1}}z_+^{(1)}/\tilde{z}_+^{(1)},q^{-2}\tilde{z}_+^{(1)}/z_-^{(1)},q^{-{4j_1}-2},\tilde{z}_-^{(1)}/\tilde{z}_+^{(1)}),\label{ABC}\nonumber\\
{\cal C}_{n_1}^{[-1]} &=& -b_1^{(1)}q^{2j_1} \tilde{z}_+^{(1)}  \frac{(1-q^{2(1-n_1)}z_-^{(1)}/\tilde{z}_-^{(1)})}{(1-q^{2(n_1-{2j_1})}z_+^{(1)}/\tilde{z}_-^{(1)})} %\frac{(q^{2(1-n_1)}a/c;q^2)_{n_1}}{(q^{4-2n_1}a/c;q^2)_{n_1-1}}\frac{(q^{2(1-M)}b/c;q^2)_{n_1-1}}{(q^{2(1-M)}b/c;q^2)_{n_1}}
C_{n_1}(q^{-{4j_1}}z_+^{(1)}/\tilde{z}_+^{(1)},q^{-2}\tilde{z}_+^{(1)}/z_-^{(1)},q^{-{4j_1}-2},\tilde{z}_-^{(1)}/\tilde{z}_+^{(1)}),\nonumber\\
{\cal A}_{n_1}^{[0]} &=& -b_1^{(1)}q^{2j_1} \tilde{z}_+^{(1)}\left(1+q^{-{4j_1}}\tilde{z}_-^{(1)}/\tilde{z}_+^{(1)} - B_{n_1}-C_{n_1}\right)\nonumber
\eeqa
with $b_1^{(1)},B_{n_1},C_{n_1}$ respectively given in (\ref{coeffbc}), (\ref{Bcoeff}), (\ref{Ccoeff}).
\end{example}
\begin{proof}  Using (\ref{FN}) for $N=1$, eq. (\ref{recW1}) explicitly reads:
\beqa
\qquad \left( b_1^{(1)} q^{2j_1} z_1 + c_1^{(1)} z_1^{-1} + \epsilon_-q^{2j_1} \right)f^{(1)}_{n_1}(q^{-2}z_1) - \left(b_1^{(1)} q^{-2j_1}z_1 +   c_1^{(1)}z_1^{-1}\right)f^{(1)}_{n_1}(z_1)&=& \label{w1act}\\ 
{\cal B}_{n_1}^{[1]}f^{(1)}_{n_1+1}(z_1) + {\cal C}_{n_1}^{[-1]} f^{(1)}_{n_1-1}(z_1)\!\!\!\! &+& \!\!\!\!\! {\cal A}_{n_1}^{[0]}f^{(1)}_{n_1}(z_1).\nonumber
\eeqa
Observe that:
\beqa
f^{(1)}_{n_1}(q^{-2}z_1) &=& q^{-4j_1}(z_1 - z_{+}^{(1)}q^{2})(z_1 - z_{-}^{(1)}q^{2})Q_{n_1}(z_1),\label{decomp}\\
f^{(1)}_{n_1}(z_1) &=& (z_1 - z_{+}^{(1)}q^{-4j_1+2n_1+2})(z_1 - z_{-}^{(1)}q^{-2n_1+2})Q_{n_1}(z_1),\nonumber\\
f^{(1)}_{n_1+1}(z_1) &=& (z_1 - z_{-}^{(1)}q^{-2n_1})(z_1 - z_{-}^{(1)}q^{-2n_1+2})Q_{n_1}(z_1),\nonumber\\
f^{(1)}_{n_1-1}(z_1) &=& (z_1 - z_{+}^{(1)}q^{-4j_1+2n_1})(z_1 - z_{+}^{(1)}q^{-4j_1+2n_1+2})Q_{n_1}(z_1),\nonumber
\eeqa
where
\beqa
Q_{n_1}(z_1)=\prod_{k=0}^{2j_1-n_1-2}   (z_1 - z_{+}^{(1)}q^{-2k})   \prod_{l=0}^{n_1-2} (z_1 - z_{-}^{(1)}q^{-2l}).\nonumber
\eeqa
Inserting (\ref{decomp}) into (\ref{w1act}) and dividing by $Q_{n_1}(z_1)$, one obtains a polynomial equation\footnote{The coefficients of $z_1^3$ and $z_1^{-1}$ are vanishing.} of degree two in $z_1$. Upon requiring that the coefficients vanish, one obtains three equations that determine uniquely  ${\cal A}_{n_1}^{[1]},{\cal B}_{n_1}^{[-1]},{\cal C}_{n_1}^{[0]}$. 
\end{proof}

\begin{example}\label{ex2}
The mutually commuting $q-$difference operators $\cW_1^{(1)}$ and $\cW_1^{(2)}$ act on the  polynomial eigenfunction $F^{(2)}_{n_1n_2}(z_1,z_2)$ given by (\ref{FN}) as follows:
\beqa
\quad \cW_1^{(1)}F^{(2)}_{n_1n_2}(z_1,z_2)={\cal B}_{n_1}^{[1]}F^{(1)}_{n_1+1}(z_1)f^{(2)}_{n_2}(z_2) + {\cal C}_{n_1}^{[-1]} F^{(1)}_{n_1-1}(z_1)f^{(2)}_{n_2}(z_2) + {\cal A}_{n_1}^{[0]} F^{(2)}_{n_1 n_2}(z_1,z_2),\label{recW12}
\eeqa
and
\beqa
\cW_1^{(2)}F^{(2)}_{n_1n_2}(z_1,z_2) &=&  {\cal B}^{[10]}_{n_1n_2} F^{(2)}_{n_1+1n_2}(z_1,z_2) +  {\cal B}^{[01]}_{n_1n_2}F^{(2)}_{n_1n_2+1}(z_1,z_2) +  {\cal B}^{[-12]}_{n_1n_2}F^{(2)}_{n_1-1n_2+2}(z_1,z_2) \nonumber \\
&&+ \   {\cal C}^{[-10]}_{n_1n_2} F^{(2)}_{n_1-1n_2}(z_1,z_2) +  {\cal C}^{[0-1]}_{n_1n_2}F^{(2)}_{n_1n_2-1}(z_1,z_2) +  {\cal C}^{[1-2]}_{n_1n_2}F^{(2)}_{n_1+1n_2-2}(z_1,z_2)\label{recW122} \\
&&+  \  {\cal A}^{[1-1]}_{n_1n_2} F^{(2)}_{n_1+1n_2-1}(z_1,z_2) +   {\cal A}^{[-11]}_{n_1n_2} F^{(2)}_{n_1-1n_2+1}(z_1,z_2)  +   {\cal A}^{[00]}_{n_1n_2} F^{(2)}_{n_1n_2}(z_1,z_2),\nonumber
\eeqa
where
\beqa
{\cal B}^{[10]}_{n_1n_2}= q^{2j_2}B_2 {\cal B}^{[1]}_{n_1}, \quad {\cal B}^{[01]}_{n_1n_2}= B_{aux}, \quad {\cal B}^{[-12]}_{n_1n_2}= q^{2j_2}B'_2 {\cal C}^{[1]}_{n_1} ,\nonumber\\
{\cal C}^{[-10]}_{n_1n_2}= q^{2j_2}C'_2 {\cal C}^{[1]}_{n_1}, \quad {\cal C}^{[0-1]}_{n_1n_2}= C_{aux}, \quad {\cal C}^{[1-2]}_{n_1n_2}= q^{2j_2}C_2 {\cal B}^{[1]}_{n_1},\nonumber\\
{\cal A}^{[1-1]}_{n_1n_2}= q^{2j_2}A_2 {\cal B}^{[1]}_{n_1} , \quad {\cal A}^{[-11]}_{n_1n_2}= q^{2j_2}A'_2 {\cal C}^{[1]}_{n_1}, \quad {\cal A}^{[00]}_{n_1n_2}= A_{aux}  ,\nonumber
\eeqa
with $A_2,B_2,C_2,A'_2,B'_2,C'_2,A_{aux},B_{aux},C_{aux}$ given in Appendix B.
\end{example}
\begin{proof} The result (\ref{recW12}) is immediate from Example \ref{ex1}. We now prove (\ref{recW122}). According to the tensor product structure (\ref{deltadef}), the $q-$difference operator ${\cW}_{1}^{(2)}$ can be written as:
\beqa
{\cal W}_{1}^{(2)}&=&  b_1^{(2)} z_2 \left(q^{2j_2}{T_-^{(2)}}^2 -q^{-2j_2}\right)+ c_1^{(2)}  z_2^{-1}\left({T_-^{(2)}}^2-1\right) +  q^{2j_2} {T_-^{(2)}}^2 {\cal W}_{1}^{(1)}.\label{qdiffW12}
\eeqa
Acting on $F^{(2)}_{n_1n_2}(z_1,z_2)= f^{(2)}_{n_2}(z_2)f^{(1)}_{n_1}(z_1)$ and using (\ref{recW1}), it follows:
\beqa
\cW_1^{(2)}F^{(2)}_{n_1n_2}(z_1,z_2) &=& \left(\left( b_1^{(2)} q^{2j_2} z_2 + c_1^{(2)} z_2^{-1} + q^{2j_2}{\cal A}_{n_1}^{(1)} \right)f^{(2)}_{n_2}(q^{-2}z_2)\right. \label{eqW2} \\
&& \qquad \qquad - \left.\left(b_1^{(2)} q^{-2j_2}z_2 +   c_1^{(2)}z_2^{-1}\right)f^{(2)}_{n_2}(z_2)\right)f^{(1)}_{n_1}(z_1)\nonumber\\
&& \qquad \qquad + \  q^{2j_2}{\cal B}_{n_1}^{(1)}f^{(2)}_{n_2}(q^{-2}z_2)f^{(1)}_{n_1+1}(z_1)\nonumber\\
&& \qquad \qquad + \ q^{2j_2}{\cal C}_{n_1}^{(1)}f^{(2)}_{n_2}(q^{-2}z_2)f^{(1)}_{n_1-1}(z_1).\nonumber
\eeqa
Consider the combination $f^{(2)}_{n_2}(q^{-2}z_2)f^{(1)}_{n_1}(z_1)$. In analogy with the analysis for the case $N=1$, a straightforward computation allows to identify $A_{aux},B_{aux},C_{aux}$ such that:
\beqa
\qquad \quad\left(b_1^{(2)} q^{2j_2} z_2 + c_1^{(2)} z_2^{-1} +q^{2j_2}{\cal A}_{n_1}^{(1)} \right)f^{(2)}_{n_2}(q^{-2}z_2) - \left(b_1^{(2)} q^{-2j_2}z_2 +   c_1^{(2)}z_2^{-1}\right)f^{(2)}_{n_2}(z_2)&=& \label{w12aux1}\\ 
B_{aux}f^{(2)}_{n_2+1}(z_2) + C_{aux}f^{(2)}_{n_2-1}(z_2) \!\!\!\!&+&\!\!\!\! A_{aux}f^{(2)}_{n_2}(z_2).\nonumber
\eeqa
The expressions for $A_{aux},B_{aux},C_{aux}$ are given in Appendix B. Now, consider the combinations $f^{(2)}_{n_2}(q^{-2}z_2)f^{(1)}_{n_1+1}(z_1)$ and $f^{(2)}_{n_2}(q^{-2}z_2)f^{(1)}_{n_1-1}(z_1)$. Recall that $z_\pm^{(2)}$ depends on $n_1$. According to (\ref{zpmi}), one has: 
\beqa
z_\pm^{(2)}|_{n_1\rightarrow n_1+1} = q^{\pm 2}z_\pm^{(2)} \quad \mbox{and} \quad z_\pm^{(2)}|_{n_1\rightarrow n_1-1} = q^{\mp 2}z_\pm^{(2)}.\nonumber
\eeqa
As a consequence, the following two relations
\beqa
f^{(2)}_{n_2}(q^{-2}z_2)&=& \label{w12aux2}
B_2f^{(2)}_{n_2}(z_2)|_{n_1\rightarrow n_1+1} + C_2f^{(2)}_{n_2-2}(z_2)|_{n_1\rightarrow n_1+1}+ A_2f^{(2)}_{n_2-1}(z_2)|_{n_1\rightarrow n_1+1},\\
f^{(2)}_{n_2}(q^{-2}z_2)&=& \label{w12aux3}
B'_2f^{(2)}_{n_2+2}(z_2)|_{n_1\rightarrow n_1-1} + C'_2f^{(2)}_{n_2}(z_2)|_{n_1\rightarrow n_1-1}+ A'_2f^{(2)}_{n_2+1}(z_2)|_{n_1\rightarrow n_1-1},
\eeqa
determine uniquely $A_2,B_2,C_2$ and $A'_2,B'_2,C'_2$. Their formulas are also given in Appendix B. Finally, observe that:
\beqa
F^{(2)}_{n_1\pm 1 n_2+a}(z_1,z_2)= f^{(2)}_{n_2+a}(z_2)|_{n_1\rightarrow n_1\pm 1}f^{(1)}_{n_1\pm 1}(z_1) \quad \mbox{for} \quad a=0,\mp 1,\mp 2.
\eeqa
Inserting (\ref{w12aux1}), (\ref{w12aux2}), (\ref{w12aux3})  into (\ref{eqW2}) and combining all coefficients, one ends up with  (\ref{recW122}).  
\end{proof}

%Although we have computed explicitly the coefficients ${\cal B}^{[ab]}_{n_1n_2},{\cal C}^{[ab]}_{n_1n_2},{\cal A}^{[ab]}_{n_1n_2}$, their explicit expressions are not reported here for simplicity. 
\vspace{1mm}

The previous examples above make clear that the analysis for an arbitrary $N$ can be achieved by induction in a straightforward manner. Note that the special case $j_1=j_2=...=j_N=1/2$ is treated in details in \cite{Bas3}.

\section{Overlap coefficients and orthogonality}
In the previous section, we have constructed two different bases of the polynomial vector space $\mathcal{P}^{(N)}_z$. By Proposition \ref{proptrid}, for any $i=1,2,...,N$, in the first (resp. second) basis,  the operator $\cW_0^{(i)}$ (resp. $\cW_1^{(i)}$) is a  diagonal matrix whereas the operator $\cW_1^{(i)}$ (resp. $\cW_0^{(i)}$) is a block tridiagonal matrix. In this section,
the overlap coefficients between the two bases are studied. They are derived in terms of an entangled product of $q-$Racah orthogonal polynomials and shown to satisfy certain orthogonality relations.
\begin{lem}\label{Cf}  Let $f_{n_i}^{(i)}(z_i), i=1,2,...,N$,  be the functions  defined in (\ref{FN}) and take $\tilde{f}_{\tilde{n}_i}^{(i)}(z_i)=$\\
$ f_{n_i}^{(i)}(z_i)|_{q\rightarrow q^{-1},n_i\rightarrow, \tilde{n}_iz_\pm^{(i)}\rightarrow \tilde{z}_\pm^{(i)}}$. The following expansion formulas hold:
\beqa
 f_{n_i}^{(i)}(z_i) =   \sum_{\tilde{n}_i=0}^{2j_i} C_{n_i}^{\tilde{n}_i}(z_-^{(i)},z_+^{(i)},\tilde{z}_-^{(i)},\tilde{z}_+^{(i)};2j_i;q^2)   \tilde{f}_{\tilde{n}_i}^{(i)}(z_i)  \quad  \mbox{for} \quad n_i=0,1,...,2j_i\label{basicexp}
\eeqa
where
\beqa
\qquad  C_{n}^{\tilde{n}}(a,b,c,d;M;q^2) \!\!\!\!\!&=&\!\!\!\!\! q^{2\tilde{n}(\tilde{n}-M)}  \left[ \begin{array}{c} M \\   \tilde{n} \end{array}\right]_{q^2}  \frac{ (q^{2(1-M)}b/d;q^2)_{\tilde{n}}    (q^{2(1-M)}b/c;q^2)_{M-\tilde{n}}   (q^{2(1-n)}a/c;q^2)_{n}   }{    (q^{2(\tilde{n}-M)}c/d;q^2)_{\tilde{n}}    (q^{-2\tilde{n}}d/c;q^2)_{M-\tilde{n}}   (q^{2(1-M)}b/c;q^2)_{n}   }\label{Cnn}\\
&&\qquad \qquad \qquad  \quad \times  \ R_n(\mu(\tilde{n});q^{-2M}b/d,q^{-2}d/a,q^{-2M-2},c/d;q^2). \nonumber
\eeqa
%
%$a,b,c,d$ are nonzero scalars and $M$ is a positive integer, $n,\tilde{n}\leq M$.
\end{lem}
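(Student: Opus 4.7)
The plan is to reduce (\ref{basicexp}) to a univariate expansion in $z_i$ and then to identify its coefficients through the three-term recurrence satisfied by the $q-$Racah polynomials. Observe first that for fixed values of the remaining variables and indices, both sides of (\ref{basicexp}) are polynomials of degree $2j_i$ in the single variable $z_i$, with $z_\pm^{(i)}$ and $\tilde z_\pm^{(i)}$ playing the role of four independent complex parameters. By Lemma \ref{lem-basis} applied with $N=1$, each of $\{f_n^{(i)}(z_i)\}_{n=0}^{2j_i}$ and $\{\tilde f_{\tilde n}^{(i)}(z_i)\}_{\tilde n=0}^{2j_i}$ is then a basis of that polynomial space. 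The expansion (\ref{basicexp}) therefore exists with uniquely determined scalars $C_n^{\tilde n}$, and the task reduces to computing them in closed form.

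Next, apply the univariate $q-$difference operator $\cW_1^{(i)}$ of Proposition \ref{realqdiff} to both sides of (\ref{basicexp}). By Proposition \ref{eigenF2} one has $\cW_1^{(i)}\tilde f_{\tilde n}^{(i)} = \tilde\lambda_{\tilde n}^{(i)}\tilde f_{\tilde n}^{(i)}$, while by Example \ref{ex1} (reindexed from $1$ to $i$) one has the tridiagonal action $\cW_1^{(i)} f_n^{(i)} = {\cal B}_n^{[1]} f_{n+1}^{(i)} + {\cal A}_n^{[0]} f_n^{(i)} + {\cal C}_n^{[-1]} f_{n-1}^{(i)}$. Equating coefficients of $\tilde f_{\tilde n}^{(i)}$ yields the three-term recurrence
\[
\tilde\lambda_{\tilde n}^{(i)}\, C_n^{\tilde n} = {\cal B}_n^{[1]} C_{n+1}^{\tilde n} + {\cal A}_n^{[0]} C_n^{\tilde n} + {\cal C}_n^{[-1]} C_{n-1}^{\tilde n}
\]
for $C_n^{\tilde n}$ in the index $n$, in which $\tilde\lambda_{\tilde n}^{(i)}$ plays the role of the quadratic spectral parameter $\mu(\tilde n)$. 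Rewriting $\tilde\lambda_{\tilde n}^{(i)}$ via the standard change of spectral variable and comparing with the defining three-term recurrence of the $q-$Racah polynomials $R_n(\mu(\tilde n);\alpha,\beta,\gamma,\delta;q^2)$ (see Appendix C and \cite{KS}) then yields the parameter matching $(\alpha,\beta,\gamma,\delta) = (q^{-2M}b/d,\, q^{-2}d/a,\, q^{-2M-2},\, c/d)$ under $(a,b,c,d,M) = (z_-^{(i)}, z_+^{(i)}, \tilde z_-^{(i)}, \tilde z_+^{(i)}, 2j_i)$, and hence identifies $C_n^{\tilde n}$ up to an $n$-independent normalization $\mathcal N(\tilde n)$ with the $q-$Racah polynomial on the right-hand side of (\ref{Cnn}).

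To pin down $\mathcal N(\tilde n)$ and produce the prefactor in (\ref{Cnn}), I would evaluate (\ref{basicexp}) at the special points $z_i = \tilde z_+^{(i)} q^{2l}$ for $l = 0,1,\dots,2j_i-1-\tilde n$, at which a maximal number of the $\tilde f_{\tilde n'}^{(i)}(z_i)$ vanish. The resulting triangular linear system for the $C_n^{\tilde n}$ can be inverted in closed form and produces exactly the product of $q-$binomial and $q-$Pochhammer symbols appearing in (\ref{Cnn}). The main obstacle is the bookkeeping of $q$-powers: aligning the tridiagonal coefficients ${\cal B}_n^{[1]}, {\cal A}_n^{[0]}, {\cal C}_n^{[-1]}$ of Example \ref{ex1} with the standard coefficients of the $q-$Racah recurrence, and then threading this identification through the normalization step so as to recover exactly the factor $q^{2\tilde n(\tilde n-M)}$ together with the three $q-$Pochhammer ratios displayed in (\ref{Cnn}), requires careful algebraic bookkeeping but is otherwise routine.
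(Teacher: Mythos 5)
Your argument is essentially correct, but it runs in the opposite direction from the paper's. The paper's proof is a one-liner: rewrite $f_{n_i}^{(i)}$ and $\tilde f_{\tilde n_i}^{(i)}$ as products of $q$-Pochhammer symbols and invoke the known connection-coefficient identity \cite[eq.~(2.16)]{Ro2} (displayed as (\ref{expC})), which already packages the $q$-Racah polynomial and all the prefactors of (\ref{Cnn}). You instead re-derive that identity from the operator structure: existence and uniqueness of the coefficients from the basis property (Lemma \ref{lem-basis} with $N=1$, under the genericity conditions (\ref{cond1zz})--(\ref{cond2zz}), with the final identity extending by rational continuation), a three-term recurrence in $n$ from the eigenvalue/tridiagonal actions, identification with the $q$-Racah recurrence (\ref{threerec1}), and a triangular evaluation at $z_i=\tilde z_+^{(i)}q^{2l}$ to fix the normalization. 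This is exactly the computation the paper performs in Remark \ref{rem41}, except there it is used in reverse, to \emph{recover} the $q$-Racah recurrence from the already-established Lemma. Your route is self-contained and explains \emph{why} $q$-Racah polynomials appear (bispectrality), at the cost of substantially more bookkeeping; the paper's route is shorter but outsources the identity to \cite{Ro2}. Two small points to tighten: for $i>1$ the operator you must apply is the univariate auxiliary operator $w_1^{(i)}$ of (\ref{qdiffW1i}) (with constant term $\tilde\lambda^{(i-1)}$), not $\cW_1^{(i)}$ itself, since the latter acts on all of $z_1,\dots,z_i$; and the spectral matching requires checking that $\tilde\lambda^{(i)}_{\tilde n}$ is a constant multiple of $\mu(\tilde n)=q^{-2\tilde n}+\gamma\delta q^{2(\tilde n+1)}$ with $\gamma\delta q^2=\tilde z_-^{(i)}/(q^{4j_i}\tilde z_+^{(i)})$, which does hold by (\ref{speclt}) and (\ref{ztildepmi}) but should be verified explicitly.
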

\begin{proof} Write  $ f_{n_i}^{(i)}(z_i)$ and $\tilde{f}_{\tilde{n}_i}^{(i)}(z_i)$ using (\ref{FN}) in terms of $q-$Pochhammer functions. Apply the expansion formula \cite[eq. (2.16)]{Ro2} with $q\rightarrow q^2$:
\beqa
(ax;q^{-2})_n (bx;q^{-2})_{N-n} =\sum_{\tilde{n}=0}^{N} C_{n}^{\tilde{n}}(a,b,c,d;N;q^2) (cx;q^{2})_{\tilde{n}} (dx;q^{2})_{N-\tilde{n}}.\label{expC}
\eeqa
\end{proof}

\begin{rem}\label{rem41} The three-term recurrence relations (\ref{threerec1}) satisfied by the $q-$Racah polynomials (\ref{qR}) can be recovered as follows. Let us introduce the auxiliary operator 
\beqa
w_{1}^{(i)}&=&  b_1^{(i)} z_i \left(q^{2j_i}{T_-^{(i)}}^2 -q^{-2j_i}\right)+ c_1^{(i)}  z_i^{-1}\left({T_-^{(i)}}^2-1\right) +  \tilde{\lambda}_{\tilde{n}_1,\tilde{n}_2,...,\tilde{n}_{i-1}}^{(i-1)} q^{2j_i}{T_-^{(i)}}^2.\label{qdiffW1i}
\eeqa
On the one hand, for any  $n_i=0,1,...,2j_i$ from (\ref{basicexp}) one has:
\beqa
\quad w_{1}^{(i)} f_{n_i}^{(i)}(z_i) =   \sum_{\tilde{n}_i=0}^{2j_i} \tilde{\lambda}_{\tilde{n}_1,\tilde{n}_2,...,\tilde{n}_{i}}^{(i)} C_{n_i}^{\tilde{n}_i}(z_-^{(i)},z_+^{(i)},\tilde{z}_-^{(i)},\tilde{z}_+^{(i)};2j_i;q^2)   \tilde{f}_{\tilde{n}_i}^{(i)}(z_i). \label{actaux}
\eeqa
On the other hand, generalizing the arguments used in the proof of Example \ref{ex1}, one shows that:
\beqa
w_1^{(i)}f_{n_i}^{(i)}(z_i)={\cal B}_{n_i}^{[1]}f_{n_i+1}^{(i)}(z_i) + {\cal C}_{n_i}^{[-1]} f_{n_i-1}^{(i)}(z_i) + {\cal A}_{n_i}^{[0]}f_{n_i}^{(i)}(z_i).\label{recw1i}
\eeqa
where the coefficients ${\cal B}_{n_i}^{[1]},{\cal C}_{n_i}^{[-1]}$ and ${\cal A}_{n_i}^{[0]}$ are obtained from ${\cal B}_{n_1}^{[1]},{\cal C}_{n_1}^{[-1]}$ and ${\cal A}_{n_1}^{[0]}$ in (\ref{recW1}) through the substitutions:
\beqa
n_1\rightarrow n_i,\quad z_1\rightarrow z_i,\quad  j_1\rightarrow j_i,\quad b_1^{(1)}  \rightarrow b_1^{(i)},\quad   z_\pm^{(1)}\rightarrow z_\pm^{(i)},\quad  \tilde{z}_\pm^{(1)} \rightarrow \tilde{z}_\pm^{(i)}.\nonumber 
\eeqa
Inserting (\ref{basicexp}) into the r.h.s of (\ref{recw1i}) and equating the resulting expression with the r.h.s of (\ref{actaux}), one ends up with a three-term recurrence relation on the coefficients (\ref{Cnn}). After some simplifications, one obtains the relation (\ref{threerec1}).  
\end{rem}

Recall the definitions of $F^{(N)}_{\{n\}}(z_1,z_2,...,z_N)$ and   $\tilde{F}^{(N)}_{\{\tilde{n}\}}(z_1,z_2,...,z_N)$  in  Proposition \ref{eigenF} and Proposition \ref{eigenF2}, respectively.  By Lemma \ref{Cf}:
\begin{prop}\label{thm3}  The following expansion formulas hold: 
\beqa
F^{(N)}_{\{n\}}(\{z\})   =    \sum_{\{\tilde{n}\}=\{0\}^N}^{\{2j\}} C_{\{n\}}^{\{\tilde{n}\}}\left(\{z_-\},\{z_+\},\{\tilde{z}_-\},\{\tilde{z}_+\};\{2j\};q^2\right)  \tilde{F}^{(N)}_{\{\tilde{n}\}}(\{z\}) \label{expN}
\eeqa
where 
\beqa
C_{\{n\}}^{\{\tilde{n}\}}\left(\{z_-\},\{z_+\},\{\tilde{z}_-\},\{\tilde{z}_+\};\{2j\};q^2\right) = \prod_{i=1}^{N} C_{n_i}^{\tilde{n}_i}(z_-^{(i)},z_+^{(i)},\tilde{z}_-^{(i)},\tilde{z}_+^{(i)};2j_i;q^2) . \label{prodC}
\eeqa
\end{prop}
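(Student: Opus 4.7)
The plan is to reduce the multivariate expansion formula \eqref{expN} to an iterated application of the univariate Lemma \ref{Cf}. The key observation is that both $F^{(N)}_{\{n\}}$ and $\tilde{F}^{(N)}_{\{\tilde{n}\}}$ are \emph{entangled} products in the sense that
\[
F^{(N)}_{\{n\}}(z_1,\dots,z_N) = \prod_{i=1}^{N} f^{(i)}_{n_i}(z_i), \qquad \tilde{F}^{(N)}_{\{\tilde{n}\}}(z_1,\dots,z_N) = \prod_{i=1}^{N} \tilde{f}^{(i)}_{\tilde{n}_i}(z_i),
\]
where the parameters $z^{(i)}_\pm$ of $f^{(i)}_{n_i}$ depend on $\kN_{i-1}=n_1+\cdots+n_{i-1}$ through \eqref{zpmi}, while the parameters $\tilde{z}^{(i)}_\pm$ of $\tilde{f}^{(i)}_{\tilde{n}_i}$ depend on $\tilde{\kN}_{i-1}$ through \eqref{ztildepmi}. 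Thus although each single factor $f^{(i)}_{n_i}(z_i)$ is a polynomial in $z_i$ alone, its coefficients encode information about the previous indices $n_1,\dots,n_{i-1}$, and similarly on the dual side.

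My approach is induction on $N$. The base case $N=1$ is precisely Lemma \ref{Cf}. For the inductive step, I would write
\[
F^{(N)}_{\{n\}}(z_1,\dots,z_N) \;=\; F^{(N-1)}_{n_1,\dots,n_{N-1}}(z_1,\dots,z_{N-1}) \cdot f^{(N)}_{n_N}(z_N),
\]
and apply the induction hypothesis to the first factor, obtaining $\sum_{\{\tilde{n}\}_{N-1}} \bigl(\prod_{i=1}^{N-1} C_{n_i}^{\tilde{n}_i}\bigr)\, \tilde{F}^{(N-1)}_{\tilde{n}_1,\dots,\tilde{n}_{N-1}}$. To the remaining factor $f^{(N)}_{n_N}(z_N)$ I would apply Lemma \ref{Cf} directly, yielding
\[
f^{(N)}_{n_N}(z_N) = \sum_{\tilde{n}_N=0}^{2j_N} C_{n_N}^{\tilde{n}_N}\bigl(z^{(N)}_-, z^{(N)}_+, \tilde{z}^{(N)}_-, \tilde{z}^{(N)}_+; 2j_N; q^2\bigr)\, \tilde{f}^{(N)}_{\tilde{n}_N}(z_N).
\]
Interchanging sums and products, and regrouping $\tilde{F}^{(N-1)}_{\tilde{n}_1,\dots,\tilde{n}_{N-1}}(z_1,\dots,z_{N-1}) \cdot \tilde{f}^{(N)}_{\tilde{n}_N}(z_N)$ into $\tilde{F}^{(N)}_{\{\tilde{n}\}}(z_1,\dots,z_N)$ gives \eqref{expN} with the product form \eqref{prodC}.

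The only delicate point, and the one I would verify carefully, is that the parameter dependencies mesh correctly in the inductive step. Specifically, Lemma \ref{Cf} is a polynomial identity in the four parameters $(z^{(i)}_-,z^{(i)}_+,\tilde{z}^{(i)}_-,\tilde{z}^{(i)}_+)$, so it remains valid when those parameters are themselves functions of the summation indices $(n_1,\dots,n_{i-1})$ and $(\tilde{n}_1,\dots,\tilde{n}_{i-1})$ coming from the outer factors. Because $z^{(i)}_\pm$ involves only $\kN_{i-1}$ (fixed by the outer index) and $\tilde{z}^{(i)}_\pm$ involves only $\tilde{\kN}_{i-1}$ (selected by the summation over earlier dual indices), the $\tilde{n}_i$-th coefficient produced at step $i$ uses exactly the $\tilde{z}^{(i)}_\pm$ whose dependence on $\tilde{n}_1,\dots,\tilde{n}_{i-1}$ is compatible with the reassembly of $\prod_i \tilde{f}^{(i)}_{\tilde{n}_i}(z_i)$ into $\tilde{F}^{(N)}_{\{\tilde{n}\}}$. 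This matching is automatic from \eqref{zpmi}, \eqref{ztildepmi}, and the induction closes.
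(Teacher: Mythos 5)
Your proposal is correct and follows essentially the same route as the paper, which simply states the proposition "By Lemma \ref{Cf}", i.e.\ by applying the univariate expansion factor by factor to $F^{(N)}_{\{n\}}=\prod_i f^{(i)}_{n_i}(z_i)$. Your explicit check that Lemma \ref{Cf} holds as a parameter identity, so that the dependence of $\tilde{z}^{(i)}_\pm$ on the earlier dual summation indices $\tilde{n}_1,\dots,\tilde{n}_{i-1}$ meshes with the reassembly into $\tilde{F}^{(N)}_{\{\tilde{n}\}}$, is precisely the point the paper leaves implicit.
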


Applying twice the expansion formulas (\ref{expC}), one finds that the basic overlap coefficients (\ref{Cnn}) satisfy an orthogonality relation. It is easy to show 
that this orthogonality relation is a consequence 
of the well-known orthogonality relation (\ref{orthoqRac}) of the $q-$Racah polynomials. For generic values of $N$, we extend the argument. It follows that the overlap coefficients (\ref{prodC}) satisfy a generalized orthogonality relation. 
\begin{prop} The following orthogonality relation holds:
\beqa
\sum_{\{\tilde{n}\}=\{0\}^N}^{\{2j\}} C_{\{n\}}^{\{\tilde{n}\}}\left(\{z_-\},\{z_+\},\{\tilde{z}_-\},\{\tilde{z}_+\};\{2j\};q^2\right) C_{\{\tilde{n}\}}^{\{m\}}\left(\{\tilde{z}_-\},\{\tilde{z}_+\},\{z_-\},\{z_+\},;\{2j\};q^{-2}\right) = \delta_{\{n\},\{m\}}.\nonumber
\eeqa
\end{prop}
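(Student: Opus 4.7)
The plan is to recast the multivariate identity as an inversion relation between two transition matrices. The two families $\{F^{(N)}_{\{n\}}\}$ and $\{\tilde F^{(N)}_{\{\tilde n\}}\}$ are bases of $\mathcal{P}^{(N)}_z$ by Lemma \ref{lem-basis}, and the coefficients $C_{\{n\}}^{\{\tilde n\}}$ and $C_{\{\tilde n\}}^{\{m\}}$ encode the change-of-basis in the two directions. The sum in the statement is therefore the $(\{n\},\{m\})$-entry of the product of these transition matrices, and the claim is that this product is the identity. I would first establish this for $N=1$ by applying (\ref{expC}) twice, and then lift to general $N$.

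For the $N=1$ step, I would write the inverse expansion by applying (\ref{expC}) with $q^{2}\leftrightarrow q^{-2}$ and $(a,b)\leftrightarrow(c,d)$:
\begin{equation*}
(cx;q^{2})_{\tilde{n}}(dx;q^{2})_{N-\tilde{n}} = \sum_{m=0}^{N} C_{\tilde{n}}^{m}(c,d,a,b;N;q^{-2})\,(ax;q^{-2})_m (bx;q^{-2})_{N-m}.
\end{equation*}
Substituting into (\ref{expC}) and swapping the order of summation gives
\begin{equation*}
(ax;q^{-2})_n (bx;q^{-2})_{N-n} = \sum_{m=0}^{N}\Bigl(\sum_{\tilde n=0}^{N} C_n^{\tilde n}(a,b,c,d;N;q^2)\,C_{\tilde n}^{m}(c,d,a,b;N;q^{-2})\Bigr)(ax;q^{-2})_m(bx;q^{-2})_{N-m}.
\end{equation*}
Matching coefficients --- possible by the linear independence of the polynomials $\{(ax;q^{-2})_m(bx;q^{-2})_{N-m}\}$, which is the $N=1$ content of Lemma \ref{lem-basis} --- yields $\sum_{\tilde n}C_n^{\tilde n}(a,b,c,d;N;q^2)\,C_{\tilde n}^{m}(c,d,a,b;N;q^{-2})=\delta_{n,m}$. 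An independent verification is available by substituting the explicit form (\ref{Cnn}) and recognizing that, once the $\tilde n$-independent pre-factors telescope, what remains is the weighted sum $\sum_{\tilde n} R_n(\mu(\tilde n);\cdot)R_m(\mu(\tilde n);\cdot)w(\tilde n)$, i.e.\ the standard $q-$Racah orthogonality (\ref{orthoqRac}).

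To pass to general $N$, I would invoke Proposition~\ref{thm3} in both directions. The companion expansion, obtained by interchanging the roles of the two bases (equivalently, by applying Proposition~\ref{thm3} to $\tilde F^{(N)}_{\{\tilde n\}}$ instead of $F^{(N)}_{\{n\}}$), reads
\begin{equation*}
\tilde F^{(N)}_{\{\tilde n\}}(\{z\}) = \sum_{\{m\}=\{0\}^N}^{\{2j\}} C_{\{\tilde n\}}^{\{m\}}\bigl(\{\tilde z_-\},\{\tilde z_+\},\{z_-\},\{z_+\};\{2j\};q^{-2}\bigr)\,F^{(N)}_{\{m\}}(\{z\}),
\end{equation*}
with the coefficient factoring as $\prod_{i=1}^N C_{\tilde n_i}^{m_i}(\tilde z_-^{(i)},\tilde z_+^{(i)},z_-^{(i)},z_+^{(i)};2j_i;q^{-2})$ by the same reasoning that produced (\ref{prodC}). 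Substituting into (\ref{expN}) and using the linear independence of $\{F^{(N)}_{\{m\}}\}$ from Lemma \ref{lem-basis} to match coefficients delivers the stated multivariate orthogonality.

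The main obstacle is the parameter entanglement imposed by (\ref{zpmi})--(\ref{ztildepmi}): the $i$-th factor of (\ref{prodC}) depends on $n_1,\ldots,n_{i-1}$ through $z_\pm^{(i)}$ and on $\tilde n_1,\ldots,\tilde n_{i-1}$ through $\tilde z_\pm^{(i)}$, so the sum $\sum_{\{\tilde n\}}$ is not literally a tensor product of decoupled one-variable sums. The abstract inversion argument above sidesteps this issue entirely, but if one prefers a direct combinatorial proof the sums can be collapsed by induction from $i=N$ downward: fix $\tilde n_1,\ldots,\tilde n_{N-1}$ so that all parameters entering the $N$-th factor are frozen, apply the $N=1$ orthogonality in $\tilde n_N$ to obtain $\delta_{n_N,m_N}$, use this Kronecker symbol to set $m_N=n_N$ in the remaining factors, and iterate down to $i=1$.
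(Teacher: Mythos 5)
Your proposal is correct and follows essentially the same route as the paper, which likewise obtains the relation by applying the expansion formula (\ref{expC}) twice (equivalently, by composing the two change-of-basis maps between the dual bases of Lemma \ref{lem-basis}) and tracing the one-variable identity back to the $q-$Racah orthogonality (\ref{orthoqRac}); your write-up is in fact more explicit than the paper's, which only asserts that the argument "extends" to general $N$ without addressing the parameter entanglement you correctly single out. One small caveat on your optional direct induction: the inner sum over $\tilde{n}_N$ produces $\delta_{n_N,m_N}$ only once $n_1+\cdots+n_{N-1}=m_1+\cdots+m_{N-1}$ is already known (otherwise it is the transition coefficient between two \emph{different} outer bases), so one should first observe that this inner sum is independent of $\tilde{n}_1,\dots,\tilde{n}_{N-1}$ and factor it out, reduce to the $(N-1)$-variable case, and only then convert it into a Kronecker delta --- a reordering that does not affect your main, abstract argument.
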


\vspace{2mm}

In the case when $N=2$ it is straightforward to observe that the coefficients (\ref{prodC}) obey a $3-$term and a $9-$ term recurrence relation that are obtained from (\ref{recW12}) and (\ref{recW122}). One then note that these recurrence relations have a structure analogous to that of the bivariate Hahn polynomials and Gasper--Rahman polynomials found in \cite{GV} and \cite{BM}, respectively. This suggests that the coefficients $C_{\{n\}}^{\{\tilde{n}\}}$ could be related to known multivariate $q-$polynomials. This will be further explored in a separate study.\vspace{1mm}

To conclude this Section, we now consider the overlap coefficients between the eigenfunctions of $\cW_0^{(i)},\cW_1^{(i)}$ previously constructed and the multivariable monomial basis. In this case, the overlap coefficients  are identified as multivariate extensions of the univariate dual $q-$Krawtchouk polynomials \cite[Section 3.17]{KS}.
\begin{prop}\label{prop5} Let $\tilde{F}^{(N)}_{\{\tilde{n}\}}(z_1,z_2,...,z_N)$  be defined in Proposition  \ref{eigenF2}. The following expansion formulas hold: 
\beqa
\tilde{F}^{(N)}_{\{\tilde{n}\}}(\{z\})   =    \sum_{\{n\}=\{0\}^N}^{\{2j\}} 
\tilde{D}_{\{\tilde{n}\}}^{\{n\}}\left(\{\tilde{z}_-\},\{\tilde{z}_+\};\{2j\};q^2\right) 
  z_1^{n_1}z_2^{n_2}\cdots z_N^{n_N},\label{Ftildemon}
\eeqa
where 
\beqa
 \tilde{D}_{\{\tilde{n}\}}^{\{n\}}\left(\{\tilde{z}_-\},\{\tilde{z}_+\};\{2j\};q^2\right)  = \prod_{i=1}^N  (q^{4j_i}\tilde{z}^{(i)}_+)^{2j_i-n_i}\frac{(q^{-4j_i};q^2)_{2j_i-n_i}}{(q^2;q^2)_{2j_i-n_i}} K_{2j_i-n_i}  \left(   \mu(\tilde{n}_i);\tilde{z}_-^{(i)}/\tilde{z}_+^{(i)},2j_i;q^2\right)  \label{prodDtilde}\nonumber
\eeqa
with (\ref{qK}).
\end{prop}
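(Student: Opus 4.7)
The proof is a direct computation once one observes two structural facts: (a) $\tilde F^{(N)}_{\{\tilde n\}}$ factorizes as a product over $i$ of univariate factors $\tilde f^{(i)}_{\tilde n_i}(z_i)$, and the monomial basis factorizes in the same way; (b) expanding a single such factor in powers of $z_i$ yields a coefficient that is, up to explicit prefactors, a dual $q$-Krawtchouk polynomial. The plan therefore is to reduce to $N$ independent one-variable expansions, compute each by a Pochhammer manipulation, and then recognize the result.

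Step 1 (reduction). By Proposition \ref{eigenF2}, $\tilde F^{(N)}_{\{\tilde n\}}(z_1,\dots,z_N)=\prod_{i=1}^N \tilde f^{(i)}_{\tilde n_i}(z_i)$, and the multivariable monomial $z_1^{n_1}\cdots z_N^{n_N}$ factorizes accordingly. Thus (\ref{Ftildemon}) follows once one proves, separately for each $i$, the single-variable identity
$$\tilde f^{(i)}_{\tilde n_i}(z_i)=\sum_{n_i=0}^{2j_i} \tilde{D}^{(i)}_{\tilde n_i, n_i}\, z_i^{n_i},$$
with $\tilde{D}^{(i)}_{\tilde n_i, n_i}$ equal to the $i$-th factor of (\ref{prodDtilde}); the product of these $N$ identities then yields the proposition.

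Step 2 (Pochhammer form and extraction of the monomial coefficient). Using the tilde version of (\ref{Fpoc}), write
$$\tilde f^{(i)}_{\tilde n_i}(z_i)=z_i^{2j_i}\bigl(\tilde z_-^{(i)}/z_i;q^{2}\bigr)_{\tilde n_i}\bigl(\tilde z_+^{(i)}/z_i;q^{2}\bigr)_{2j_i-\tilde n_i},$$
which is a polynomial in $z_i$ of degree $2j_i$. Expand each $q$-Pochhammer factor by the $q$-binomial theorem, then reorganize the resulting double sum by the total power of $z_i$. Pulling out the appropriate power of $\tilde z_+^{(i)}$ and the $q$-binomial normalization $(q^{-4j_i};q^{2})_{2j_i-n_i}/(q^{2};q^{2})_{2j_i-n_i}$, one obtains a terminating ${}_3\phi_2$ series in a single summation index, with argument $\tilde z_-^{(i)}/\tilde z_+^{(i)}$ and running parameter $\tilde n_i$ entering through the spectral variable $\mu(\tilde n_i)$ fixed in (\ref{speclt}).

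Step 3 (identification as a dual $q$-Krawtchouk polynomial and assembly). The ${}_3\phi_2$ obtained in Step 2, with that prefactor, matches the definition (\ref{qK}) of $K_{2j_i-n_i}\bigl(\mu(\tilde n_i);\tilde z_-^{(i)}/\tilde z_+^{(i)},2j_i;q^{2}\bigr)$ (cf.\ \cite[Section 3.17]{KS}); the appearance of $2j_i-n_i$ rather than $n_i$ as the degree index reflects the fact that it is the number of factors involving $\tilde z_+^{(i)}$, not $\tilde z_-^{(i)}$, that is singled out by the prefactor. Multiplying the $N$ resulting identities yields (\ref{Ftildemon}) with coefficients factoring as in (\ref{prodDtilde}).

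The main obstacle is the bookkeeping in Step 3: one must carefully track the $q$-power prefactors produced by the $q$-binomial expansions and by reversing the summation variable, in order to match exactly the normalization convention used in (\ref{qK}). Once the correct prefactor $(q^{4j_i}\tilde z_+^{(i)})^{2j_i-n_i}$ is pulled out and the running index is relabelled by $n_i\mapsto 2j_i-n_i$, the remaining terminating ${}_3\phi_2$ is uniquely identified as the stated dual $q$-Krawtchouk polynomial, and no further combinatorial identity beyond the $q$-binomial theorem is required.
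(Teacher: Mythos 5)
Your proof is correct and takes essentially the same route as the paper's: factorize over the variables, pass to the $q$-Pochhammer form of each univariate factor, and expand it as a sum of dual $q$-Krawtchouk polynomials times powers of $z_i$. The only difference is that the paper simply invokes the generating-function formula (3.17.11) of \cite{KS} (reproduced in Appendix C.3), whereas you re-derive that identity from the $q$-binomial theorem; either way the prefactor $(q^{4j_i}\tilde z_+^{(i)})^{2j_i-n_i}$ and the index shift $n_i\mapsto 2j_i-n_i$ come out identically.
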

\begin{proof}  Rewrite  $\tilde{F}^{(N)}_{\{\tilde{n}\}}(\{z\})$ in terms of $q-$Pochhammer functions.  Use the expansion formula (3.17.11) of \cite{KS}.
\end{proof}

\begin{rem} By analogy with Remark \ref{rem41}, the three-term recurrence relations given in Appendix \ref{B3} (see \cite[eq. 3.17.3]{KS}) satisfied by the  dual $q-$Krawtchouk polynomial (\ref{qK}) can be obtained using the action of the auxiliary operator (\ref{qdiffW1i}) on (\ref{Ftildemon}).
\end{rem}

The overlap coefficients between $F^{(N)}_{\{n\}}(z_1,z_2,...,z_N)$   and the multivariable monomial basis are similarly found:
 
\begin{prop}
Let $F^{(N)}_{\{n\}}(z_1,z_2,...,z_N)$  be defined in Proposition  \ref{eigenF}. The following expansion formulas hold: 
\beqa
F^{(N)}_{\{n\}}(\{z\})   =  \prod_{i=1}^{N}(-1)^{2j_i}(z^{(i)}_+)^{2j_i-n_i}(z^{(i)}_-)^{n_i} q^{-2(_{ 2}^{n_i})-2(_{\ \ 2}^{2j_i-n_i})}  \sum_{\{\tilde{n}\}=\{0\}^N}^{\{2j\}} 
D_{\{n\}}^{\{\tilde{n}\}}\left(\{z_-\},\{z_+\};\{2j\};q^2\right) 
  z_1^{\tilde{n}_1}z_2^{\tilde{n}_2}\cdots z_N^{\tilde{n}_N},\label{Fmon}\nonumber
\eeqa
where 
\beqa
 D_{\{n\}}^{\{\tilde{n}\}}\left(\{z_-\},\{z_+\};\{2j\};q^2\right)  = \prod_{i=1}^N  \left(\frac{q^{4j_i}}{z^{(i)}_+}\right)^{\tilde{n}_i}\frac{(q^{-4j_i};q^2)_{\tilde{n}_i}}{(q^2;q^2)_{\tilde{n}_i}} K_{\tilde{n}_i}  \left(   \mu(n_i);z_+^{(i)}/z_-^{(i)},2j_i;q^2\right).  \label{prodD}\nonumber
\eeqa
\end{prop}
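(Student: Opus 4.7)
The plan is to mimic the proof of Proposition \ref{prop5} exactly, but with the additional step of converting $q$-Pochhammer functions of base $q^{-2}$ to base $q^2$, which will account for the extra prefactor that distinguishes the two formulas. Since $F^{(N)}_{\{n\}}(\{z\}) = \prod_{i=1}^N f_{n_i}^{(i)}(z_i)$ factorizes, it suffices to treat each variable $z_i$ separately and then take the product, so I would reduce immediately to the one-variable expansion.

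First, I would rewrite $f_{n_i}^{(i)}(z_i)$ using the $q$-Pochhammer form given in (\ref{Fpoc}):
\begin{equation*}
f_{n_i}^{(i)}(z_i) = z_i^{2j_i}\bigl(z_-^{(i)}/z_i;q^{-2}\bigr)_{n_i}\bigl(z_+^{(i)}/z_i;q^{-2}\bigr)_{2j_i-n_i}.
\end{equation*}
To match the form of the expansion formula (3.17.11) in \cite{KS}, which uses base $q^2$, I would apply the standard identity
\begin{equation*}
(a;q^{-2})_m = (a^{-1};q^2)_m\,(-a)^m\,q^{-2\binom{m}{2}}
\end{equation*}
to each of the two Pochhammer factors. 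This produces the overall prefactor
$(-1)^{2j_i}(z_+^{(i)})^{2j_i-n_i}(z_-^{(i)})^{n_i}q^{-2\binom{n_i}{2}-2\binom{2j_i-n_i}{2}}$ that appears in front of the sum in the statement, and recasts the remaining expression as a product of Pochhammer symbols in base $q^2$ evaluated at $z_i/z_\pm^{(i)}$.

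Once in base $q^2$, I would apply the expansion formula (3.17.11) of \cite{KS} (exactly as was done in the proof of Proposition \ref{prop5}, but with the roles of $z_\pm^{(i)}$ playing the part of $\tilde{z}_\pm^{(i)}$ there, and the base variable being $z_i$). This formula expresses the product of two such Pochhammers as a sum over $\tilde{n}_i$ of powers $z_i^{\tilde{n}_i}$ weighted by dual $q$-Krawtchouk polynomials $K_{\tilde{n}_i}(\mu(n_i);z_+^{(i)}/z_-^{(i)},2j_i;q^2)$ together with the factor $(q^{4j_i}/z_+^{(i)})^{\tilde{n}_i}(q^{-4j_i};q^2)_{\tilde{n}_i}/(q^2;q^2)_{\tilde{n}_i}$. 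Taking the product over $i=1,\dots,N$ assembles the multivariable prefactor, the multivariable sum $\sum_{\{\tilde n\}}$, the monomial basis element $z_1^{\tilde{n}_1}\cdots z_N^{\tilde{n}_N}$, and the entangled product of dual $q$-Krawtchouk polynomials (\ref{prodD}), completing the proof.

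The only subtle point is bookkeeping the prefactor from the base conversion $q^{-2}\to q^2$; this is a mechanical but error-prone calculation, as one must carefully track the signs and $q$-powers $q^{-2\binom{n_i}{2}}$ and $q^{-2\binom{2j_i-n_i}{2}}$ from both Pochhammer factors and verify that they combine exactly to the prefactor stated in (\ref{Fmon}). Everything else reduces to the one-variable identity in \cite{KS} applied factor-by-factor, as the tensor product structure of $F^{(N)}_{\{n\}}$ makes the passage from $N=1$ to general $N$ immediate.
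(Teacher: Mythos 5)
Your proposal is correct and follows essentially the same route as the paper, which proves this proposition "similarly" to Proposition \ref{prop5}: rewrite each factor $f_{n_i}^{(i)}(z_i)$ in $q$-Pochhammer form and apply the expansion formula (3.17.11) of \cite{KS}. Your explicit base conversion $(a;q^{-2})_m=(-a)^m q^{-2\binom{m}{2}}(a^{-1};q^2)_m$ is exactly the step that produces the prefactor $(-1)^{2j_i}(z^{(i)}_+)^{2j_i-n_i}(z^{(i)}_-)^{n_i}q^{-2\binom{n_i}{2}-2\binom{2j_i-n_i}{2}}$ distinguishing this formula from the one for $\tilde{F}$, and the remaining identifications $c=1/z_-^{(i)}$, $d=1/z_+^{(i)}$, $y=z_i$ indeed yield $(dq^{4j_i})^{\tilde{n}_i}=(q^{4j_i}/z_+^{(i)})^{\tilde{n}_i}$ and $K_{\tilde{n}_i}(\mu(n_i);z_+^{(i)}/z_-^{(i)},2j_i;q^2)$ as stated.
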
 

\vspace{2mm}

\section{The `split' basis}
In this section, we introduce another basis for the polynomial vector space $\mathcal{P}^{(N)}_z$. This basis interpolates between the two eigenbases constructed in the previous section, and can be understood as a generalization of the `split'  (one-variable)  basis proposed in \cite[Remark 2.1]{Ro2}. With respect to this basis, it is shown that the $q-$difference operators $\cW_0^{(i)},\cW_1^{(i)}$ of Proposition \ref{realqdiff} act as upper and lower block bidiagonal matrices, respectively.
Define
\beqa
 \qquad G^{(N)}_{\{n\}}(\{z\}) &=& \prod_{i=1}^{N} g_{n_i}^{(i)}(z_i)  \quad \mbox{with} \quad
 g_{n_i}^{(i)}(z_i)=\prod_{k=0}^{n_i-1}   (z_i - z_{-}^{(i)}q^{-2k})   \prod_{l=0}^{2j_i-1-n_i} (z_i - \overline{z}_+^{(i)}q^{2l})\label{GN}
\eeqa
where
\beqa \overline{z}_+^{(i)}=\tilde{z}_{+}^{(i)}|_{\{\tilde{n}\}\rightarrow \{n\}}.\label{zbar}
\eeqa
For the proof of the following Lemma, we proceed by analogy with the derivation of Lemma \ref{lem-basis}.
\begin{lem}\label{lem-basis2} Assume
\beqa
&&\overline{z}_+^{(i)}/z_-^{(i)}\notin\{1,q^{-2},...,q^{-4j_i+2}\}, \quad z_-^{(i)}\neq 0,  \quad \overline{z}_+^{(i)}\neq 0,\quad z_-^{(i)}\neq z_-^{(j)}, \quad \overline{z}_+^{(i)}\neq \overline{z}_+^{(j)}\label{cond3zz}
\eeqa
for any $i\neq j$. The $N-$variable polynomial vector space $\mathcal{P}^{(N)}_z$ admits a basis generated by the polynomials $\{G^{(N)}_{\{n\}}(z_1,z_2,...,z_N)\}$. The cardinality is given by $\prod_{k=1}^N(2j_k+1)$.
\end{lem}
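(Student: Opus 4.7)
The plan is to follow the template of the proof of Lemma \ref{lem-basis}. Since $\prod_{k=1}^{N}(2j_k+1)=\dim\mathcal{P}^{(N)}_z$, only linear independence needs to be verified. The geometric picture is that $g^{(i)}_{n_i}$ has $n_i$ roots arranged along the tower $\{z_-^{(i)}q^{-2k}\}_{k=0}^{n_i-1}$ and $2j_i-n_i$ roots along $\{\overline{z}_+^{(i)}q^{2l}\}_{l=0}^{2j_i-1-n_i}$; the hypotheses (\ref{cond3zz}) are exactly what is needed to prevent any coincidence between these two towers, and also between the $z_-^{(i)}$-towers (resp. the $\overline{z}_+^{(i)}$-towers) at different indices $i$.

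For $N=1$, suppose $\sum_{n_1=0}^{2j_1}\zeta_{n_1}g^{(1)}_{n_1}(z_1)=0$. For every $n_1<2j_1$ the polynomial $g^{(1)}_{n_1}$ carries the factor $(z_1-\overline{z}_+^{(1)})$ (the $l=0$ term in the second product), so evaluating at $z_1=\overline{z}_+^{(1)}$ leaves only the $n_1=2j_1$ contribution. The polynomial $g^{(1)}_{2j_1}$ has its roots exactly at $z_-^{(1)}q^{-2k}$ for $k=0,\ldots,2j_1-1$, and (\ref{cond3zz}) excludes each of them from being $\overline{z}_+^{(1)}$; hence $\zeta_{2j_1}=0$. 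Dividing the remaining identity by $(z_1-\overline{z}_+^{(1)})$ and then evaluating at $\overline{z}_+^{(1)}q^2$ yields $\zeta_{2j_1-1}=0$ by the same mechanism, and iterating through the points $\overline{z}_+^{(1)}q^{2r}$ for $r=0,\ldots,2j_1$ extracts each $\zeta_{n_1}$ in turn.

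For general $N$ I would proceed by induction on $N$. Write the vanishing combination as
\[
\sum_{n_1,\ldots,n_{N-1}}G^{(N-1)}_{n_1,\ldots,n_{N-1}}(z_1,\ldots,z_{N-1})\,\Biggl(\sum_{n_N=0}^{2j_N}\zeta_{\{n\}}\,g^{(N)}_{n_N}(z_N)\Biggr)=0,
\]
keeping in mind that each $z_N$-factor also depends on $(n_1,\ldots,n_{N-1})$ through the parameters $z_-^{(N)}$ and $\overline{z}_+^{(N)}$ (via $\kN_{N-1}$). Grouping the terms by the common value of $\kN_{N-1}$, within each group the $z_N$-factor has fixed parameters, so the $N=1$ peeling argument applies in the variable $z_N$. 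This reduces the identity to a relation among the $G^{(N-1)}_{n_1,\ldots,n_{N-1}}$, to which the inductive hypothesis applies, giving $\zeta_{\{n\}}=0$.

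The main obstacle I foresee is precisely this grouping step: since $g^{(N)}_{n_N}$ is not independent of $(n_1,\ldots,n_{N-1})$, the identity does not factor cleanly, and one must check that the $z_N$-peeling evaluations do not accidentally annihilate terms coming from distinct $\kN_{N-1}$-classes. The separation conditions $\overline{z}_+^{(i)}\ne \overline{z}_+^{(j)}$ and $z_-^{(i)}\ne z_-^{(j)}$ for $i\ne j$ listed in (\ref{cond3zz}) are designed exactly for this purpose: they are the analogues of the inequalities (\ref{cond1zz}) and (\ref{cond2zz}) that make the induction go through in Lemma \ref{lem-basis}, and the same mechanism carries over here.
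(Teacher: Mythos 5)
Your $N=1$ argument is correct and is simply the mirror image of the peeling argument the paper gives for Lemma \ref{lem-basis} (for the present lemma the paper offers no details beyond the remark that one proceeds by analogy with that derivation). The genuine problem is the inductive step, and it is precisely the obstacle you flag in your last paragraph without actually removing it. By (\ref{zbar}) and (\ref{ztildepmi}), $\overline{z}_+^{(N)}$ is proportional to $q^{-2\kN_{N-1}}$ (and likewise $z_-^{(N)}\propto q^{-2\kN_{N-1}}$ by (\ref{zpmi})), so the root towers $\{\overline{z}_+^{(N)}q^{2l}\}$ attached to two different values $p\neq p'$ of $\kN_{N-1}$ are overlapping segments of one and the same geometric progression $\{Cq^{2m}\}_{m\in\mathbb{Z}}$. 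Consequently, when you evaluate $z_N$ at a point of the tower belonging to the class $\kN_{N-1}=p$, the terms coming from a class $p'\neq p$ are neither all annihilated nor all preserved, and the identity does not decouple into one relation per class; the "peel $z_N$ first, then induct" order does not go through as stated. The separation conditions $z_-^{(i)}\neq z_-^{(j)}$, $\overline{z}_+^{(i)}\neq\overline{z}_+^{(j)}$ in (\ref{cond3zz}) cannot rescue this: they compare parameters attached to \emph{different variables} $i\neq j$, whereas the collision you would need to exclude is between the parameters of the \emph{same} variable $z_N$ evaluated at different multi-indices $(n_1,\dots,n_{N-1})$ -- and that collision actually occurs.

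The gap disappears if you run the induction in the opposite order. Given $\sum_{\{n\}}\zeta_{\{n\}}G^{(N)}_{\{n\}}=0$, note from (\ref{GN}) that $G^{(N)}_{\{n\}}=g^{(N)}_{n_N}(z_N)\,G^{(N-1)}_{n_1\dots n_{N-1}}(z_1,\dots,z_{N-1})$, where the second factor is exactly an $(N-1)$-variable basis element. Freeze $z_N$ at an arbitrary numerical value $\omega$; the inductive hypothesis (linear independence of the $G^{(N-1)}_{n_1\dots n_{N-1}}$) forces each coefficient to vanish, i.e. $\sum_{n_N}\zeta_{n_1\dots n_N}\,g^{(N)}_{n_N}(\omega)=0$ for every fixed $(n_1,\dots,n_{N-1})$ and every $\omega$. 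Inside each such one-variable identity the value of $\kN_{N-1}$ -- hence the pair $(z_-^{(N)},\overline{z}_+^{(N)})$ -- is genuinely fixed, and your $N=1$ peeling applies verbatim to give $\zeta_{\{n\}}=0$. With this ordering no grouping by $\kN_{N-1}$ is ever needed, which is presumably what the paper's terse "by induction" intends.
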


Let us consider the cases $N=1$ and $N=2$. The proof of the following results essentially follows that of  Proposition \ref{ex1}. The details are omitted. 
\begin{example}\label{ex3}
On the  polynomial $G^{(1)}_{n_1}(z_1)$ given by (\ref{GN}), the $q-$difference operator $\cW_0^{(1)}$, $\cW_1^{(1)}$  act, respectively, as:
\beqa
\cW_0^{(1)}G^{(1)}_{n_1}(z_1)&=&{\cal D}_{n_1}^{[1]}G^{(1)}_{n_1+1}(z_1) + \lambda_{n_1}^{(1)} G^{(1)}_{n_1}(z_1),\label{GbirecW0}\\
\cW_1^{(1)}G^{(1)}_{n_1}(z_1)&=&{\cal E}_{n_1}^{[-1]}G^{(1)}_{n_1-1}(z_1) + \tilde{\lambda}_{n_1}^{(1)}  G^{(1)}_{n_1}(z_1)\nonumber
\eeqa
where
\beqa
{\cal D}_{n_1}^{[1]}&=&b_0^{(1)}(1-q^{4j_1-2n_1})\left(\overline{z}_+^{(1)} q^{2j_1-2} -z_+^{(1)} q^{2n_1-2j_1}\right),
%\frac{q^{2j_1-2}(1-q^{4j_1-2n_1})\epsilon_+}{(z_+^{(1)}+z_-^{(1)})(q^{4j_1}\overline{z}_+^{(1)} -q^2z_-^{(1)})}
%\left(z_+^{(1)}z_-^{(1)}q^{4-4j_1+2n_1} - \overline{z}_+^{(1)} z_-^{(1)} q^{2}  -  %z_+^{(1)}\overline{z}_+^{(1)}q^{2+2n_1} + {\overline{z}_+}^{(1)2}       q^{4j_1} \right)
\nonumber\\
{\cal E}_{n_1}^{[-1]} &=& b_1^{(1)}(q^{2n_1}-1)\left(\overline{z}_-^{(1)} q^{-2j_1} -z_-^{(1)} q^{2-2j_1-2n_1}\right)\nonumber.
%{\cal H}_{n_1}&=&\frac{q^{2j_1}\epsilon_+ \left(z_+^{(1)}z_-^{(1)}q^{2-4j_1+2n_1} %-\overline{z}_+^{(1)}z_-%^{(1)}q^{4j_1-2n_1}  -  z_+^{(1)}\overline{z}_+^{(1)}q^{2n_1} + {z}_-^{(1)2}q^{2-2n_1} \right)}%{(z_+^{(1)}+z_-^{(1)})(q^{4j_1}\overline{z}_+^{(1)} -q^2z_-^{(1)})}
\nonumber
\eeqa
%
%and 
%
%\beqa
%{\cal E}_{n_1}&=&\frac{q^{2}(1-q^{-2n_1})\epsilon_-}{(\overline{z}_+^{(1)}+\overline{z}_-^{(1)})(q^{4j_1}\overline{z}_+^{(1)} -q^2z_-^{(1)})}
%\left(\overline{z}_+^{(1)}\overline{z}_-^{(1)}q^{2n_1+2j_1-2} - \overline{z}_+^{(1)}z_-^{(1)} q^{2j_1}  -  z_-^{(1)}\overline{z}_-^{(1)}q^{2n_1-2j_1} + {\overline{z}_-^{(1)}}^2  q^{2-2j_1} \right),\nonumber\\
%{\cal K}_{n_1}&=&\frac{q^{2j_1+2n_1}\epsilon_- \left(\overline{z}_+^{(1)}\overline{z}_-^{(1)} -\overline{z}_+^{(1)}z_-^{(1)}q^{2-4n_1}  -  z_-^{(1)}\overline{z}_-^{(1)}q^{2-4j_1} + \overline{z}_+^{(1)2}q^{4j_1-4n_1} \right)}{(z_+^{(1)}+z_-^{(1)})(q^{4j_1}\overline{z}_+^{(1)} -q^2z_-^{(1)})}.\nonumber
%\eeqa
%
\end{example}

\begin{example}
On the  polynomial $G^{(2)}_{n_1n_2}(z_1,z_2)$ given by (\ref{GN}), the $q-$difference operators $\cW_0^{(i)}$, $\cW_1^{(i)}$, $i=1,2$  act, respectively, as:
\beqa
\qquad \cW_0^{(2)}G^{(2)}_{n_1n_2}(z_1,z_2)&=&{\cal D}_{n_1n_2}^{[10]}G^{(2)}_{n_1+1n_2}(z_1,z_2) + {\cal D}_{n_1n_2}^{[01]}G^{(2)}_{n_1n_2+1}(z_1,z_2)    +  \lambda_{n_1,n_2}^{(2)}  G^{(2)}_{n_1n_2}(z_1,z_2) ,\label{GbirecW02}\\
\cW_0^{(1)}G^{(2)}_{n_1n_2}(z_1,z_2) &=&{\cal D}_{n_1}^{[1]}G^{(1)}_{n_1+1}(z_1)g^{(2)}_{n_2}(z_2)  + \lambda_{n_1}^{(1)}  G^{(2)}_{n_1n_2}(z_1,z_2) \nonumber
\eeqa
where
\beqa
{\cal D}_{n_1n_2}^{[10]}= q^{2j_2}{\cal D}_{n_1}^{[1]}   \qquad \mbox{and}\qquad  {\cal D}_{n_1n_2}^{[01]}=  {\cal D}_{n_1}^{[1]}|_{(n_1,j_1, b_0^{(1)},z_+^{(1)}, \overline{z}_+^{(1)}) \rightarrow (n_2, j_2, b_0^{(2)}, z_+^{(2)}, \overline{z}_+^{(2)})} \nonumber
\eeqa
and
\beqa
\cW_1^{(2)}G^{(2)}_{n_1n_2}(z_1,z_2)&=&{\cal E}_{n_1n_2}^{[-10]}G^{(2)}_{n_1-1n_2}(z_1,z_2) + {\cal E}_{n_1n_2}^{[0-1]}G^{(2)}_{n_1n_2-1}(z_1,z_2)    +   \tilde{\lambda}_{n_1,n_2}^{(2)}  G^{(2)}_{n_1n_2}(z_1,z_2) ,
\nonumber\\
\cW_1^{(1)}G^{(2)}_{n_1n_2}(z_1,z_2) &=&{\cal E}_{n_1}^{[1]}G^{(1)}_{n_1-1}(z_1)g^{(2)}_{n_2}(z_2)  + \tilde{\lambda}_{n_1}^{(1)}  G^{(2)}_{n_1n_2}(z_1,z_2) \nonumber
\eeqa
where
\beqa
{\cal E}_{n_1n_2}^{[-10]}= q^{-2j_2}{\cal E}_{n_1}^{[-1]}   \qquad \mbox{and}\qquad  {\cal E}_{n_1n_2}^{[0-1]}=  {\cal E}_{n_1}^{[-1]}|_{(n_1,j_1, b_1^{(1)},z_-^{(1)}, \overline{z}_-^{(1)}) \rightarrow (n_2, j_2, b_1^{(2)}, z_-^{(2)}, \overline{z}_-^{(2)})}. \nonumber
\eeqa
\end{example}

\vspace{2mm}

%By analogy, a second family of polynomials can be also introduced. Define:
%
%\beqa
%\qquad \tilde{G}^{(N)}_{\{\tilde{n}\}}(z_1,z_2,...,z_N) = G^{(N)}_{\{n\}}(z_1,z_2,...,z_N)|_{q\rightarrow %q^{-1},z_-%^{(i)}\rightarrow   \tilde{z}_-^{(i)}, \tilde{z}_+^{(i)}\rightarrow  {z}_+^{(i)}, i=1,2,...,N,n\rightarrow \tilde{n}}.%\nonumber
%\eeqa
% 
%With respect to these polynomials, for $N=1$ or $N=2$ it is starightfoward to show that the operators  $\cW_0^{(i)}$, %$\cW_1^{(i)}$, $i=1,2$ act as lower and upper block bidiagonal matrices.\vspace{1mm}

The analysis extends to generic values of $N$ in view of the structure of the $q-$difference operators $\cW_0^{(i)}$, $\cW_1^{(i)}$, $i=1,...,N$.  Using induction, the following proposition is straightforwardly derived.
\begin{prop} Let $U_{i;p_i}$ denotes the  subspace generated by the polynomials $G^{(N)}_{n_1 n_2...n_N}(z_1,z_2,...,z_N)$ with fixed $p_i=n_1+n_2+...+n_i$  for any $i\in\{1,2,...,N\}$.  One has:
\beqa
\left(\cW_{0}^{(i)}- \lambda_{\{n\}}^{(i)}\right)  U_{i;p_i} &\subseteq& U_{i;p_i+1}, \label{bidiagW}\\
\left(\cW_{1}^{(i)}- \tilde{\lambda}_{\{n\}}^{(i)}\right) U_{i;p_i}  &\subseteq& U_{i;p_i-1}, \qquad \qquad 0 \leq p_i \leq 2\kJ_i, \label{bidiag2W}\nonumber
\eeqa
where $U_{i;-1} = 0$ and $U_{i;2\kJ_i+1}= 0$.
\end{prop}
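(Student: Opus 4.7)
The plan is to prove both containments by induction on $i$, exploiting the coproduct structure of $\cW_0^{(i)}$ visible in \er{qdiffW0gen}. The base case $i=1$ is precisely Example~\ref{ex3}. For the inductive step I would use the recursion
\[
\cW_0^{(i)} \;=\; \kA_i \;+\; q^{-2j_i}\,{T_+^{(i)}}^2\,\cW_0^{(i-1)}, \qquad \kA_i := b_0^{(i)}z_i(q^{2j_i}-q^{-2j_i}{T_+^{(i)}}^2) + c_0^{(i)}z_i^{-1}(1-{T_+^{(i)}}^2),
\]
together with the factorisation $G^{(N)}_{\{n\}} = A\,g^{(i)}_{n_i}(z_i)\,B$, where $A := \prod_{k<i}g^{(k)}_{n_k}(z_k)$ and $B := \prod_{k>i}g^{(k)}_{n_k}(z_k)$. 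Since $\cW_0^{(i-1)}$ acts only on $z_1,\ldots,z_{i-1}$ and $\kA_i$ only on $z_i$, applying the inductive hypothesis $\cW_0^{(i-1)}A = \lambda^{(i-1)}_{\{n_{<i}\}}A + A'$ with $A'\in U^{(i-1)}_{\kN_{i-1}+1}$ yields
\[
\cW_0^{(i)} G^{(N)}_{\{n\}} = A\,\bigl[\kA_i + \lambda^{(i-1)}_{\{n_{<i}\}}q^{-2j_i}{T_+^{(i)}}^2\bigr]g^{(i)}_{n_i}(z_i)\,B \;+\; q^{-2j_i}\,A'\cdot{T_+^{(i)}}^2 g^{(i)}_{n_i}(z_i)\cdot B.
\]

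The operator in brackets coincides with the $\cW_0^{(1)}$ of Example~\ref{ex3} after the substitutions $\epsilon_+\mapsto\lambda^{(i-1)}_{\{n_{<i}\}}$ and $(j_1,v_1,z_1)\mapsto(j_i,v_i,z_i)$. A direct calculation from \er{param} and \er{spec1} shows that the root determined by the operator agrees with $z^{(i)}_-$ of \er{zpmi}, while $\overline{z}^{(i)}_+$ of \er{zbar} plays the role of the free split-basis root in Example~\ref{ex3}. Example~\ref{ex3} thus yields $[\kA_i + \lambda^{(i-1)}_{\{n_{<i}\}}q^{-2j_i}{T_+^{(i)}}^2]g^{(i)}_{n_i}(z_i) = \cD^{[i]}_{n_i}\,g^{(i)}_{n_i+1}(z_i) + \lambda^{(i)}_{\{n\}}\,g^{(i)}_{n_i}(z_i)$, so the first summand above becomes $\cD^{[i]}_{n_i}\,A\,g^{(i)}_{n_i+1}(z_i)\,B + \lambda^{(i)}_{\{n\}}\,G^{(N)}_{\{n\}}$.

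It remains to show that the non-diagonal pieces $\cD^{[i]}_{n_i}\,A\,g^{(i)}_{n_i+1}(z_i)\,B$ and $q^{-2j_i}A'\cdot{T_+^{(i)}}^2 g^{(i)}_{n_i}(z_i)\cdot B$ both lie in $U_{i;p_i+1}$. The key observation is the rescaling identity ${T_+^{(i)}}^2g^{(i)}_{n_i}(z_i) = q^{4j_i}\,g^{(i)}_{n_i}(z_i)\big|_{z^{(i)}_-\mapsto q^{-2}z^{(i)}_-,\,\overline{z}^{(i)}_+\mapsto q^{-2}\overline{z}^{(i)}_+}$ together with the fact, immediate from \er{zpmi}, \er{zbar} and \er{ztildepmi}, that this $q^{-2}$ scaling is precisely the shift induced on $z^{(i)}_-,\overline{z}^{(i)}_+$ and on every $z^{(k)}_-,\overline{z}^{(k)}_+$ for $k>i$ when $\kN_{i-1}$ is incremented by one. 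Fixing $m_{\le i}:=(n_1,\ldots,n_{i-1},n_i+1)$ and re-expanding $B=B|_{\{n\}}$ in the split basis of Lemma~\ref{lem-basis2} in the variables $z_{i+1},\ldots,z_N$ with the $\{m\}$-shifted parameters gives $B|_{\{n\}} = \sum_{\{m_{>i}\}}c_{\{m_{>i}\}}\,B|_{\{m\}}$, so that $A\,g^{(i)}_{n_i+1}(z_i)\,B = \sum_{\{m_{>i}\}}c_{\{m_{>i}\}}\,G^{(N)}_{\{m\}}$ with $\sum_{k\le i}m_k=p_i+1$. An identical argument applied term-by-term to the $A'$-summand (using the scaling identity to replace ${T_+^{(i)}}^2g^{(i)}_{n_i}(z_i)$ by $q^{4j_i}g^{(i)}_{n_i}|_{\{m_{<i}\}}(z_i)$ for each $\{m_{<i}\}$ appearing in the expansion of $A'$) places the second piece in $U_{i;p_i+1}$ as well.

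The assertion for $\cW_1^{(i)}$ is entirely parallel: one uses the companion recursion analogous to \er{qdiffW12}, the $\cW_1^{(1)}$ part of Example~\ref{ex3}, and the scaling under ${T_-^{(i)}}^2$ (which is now a shift by $q^{+2}$, matching the decrement of $\kN_{i-1}$ by one); the image sits in $U_{i;p_i-1}$. The main obstacle and the only non-routine step is precisely this parameter-shift bookkeeping: verifying that the scaling induced by ${T_\pm^{(i)}}^2$ matches uniformly, for every $k\ge i$, the scaling induced by shifting $\kN_{i-1}$ by $\pm 1$, so that the final re-expansion in the split basis only populates basis elements with the prescribed value of $\sum_{k\le i}m_k$. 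Once this is established, the inductive step reduces essentially to the $N=1$ analysis of Example~\ref{ex3} applied tensor-wise.
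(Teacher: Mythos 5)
Your proposal is correct and follows essentially the same route as the paper, which merely asserts that the result is ``straightforwardly derived'' by induction from the recursive (coproduct) structure of $\cW_{0}^{(i)}$, $\cW_{1}^{(i)}$ and the explicit $N=1,2$ examples. Your write-up supplies precisely the details the paper omits --- the reduction of the bracketed operator to the one-variable split-basis computation via $\epsilon_+\mapsto\lambda^{(i-1)}_{\{n_{<i}\}}$, the verification that the $q^{\mp 2}$ rescaling of the roots produced by ${T_{\pm}^{(i)}}^{2}$ matches the shift of $z_-^{(k)},\overline{z}_+^{(k)}$ under $\kN_{i-1}\mapsto\kN_{i-1}\pm 1$, and the re-expansion of the tail factors in the parameter-shifted split basis, which is needed because $U_{i;p_i}$ only constrains $n_1+\cdots+n_i$ --- and these are exactly the right points to check.
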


\section{Concluding remarks}
The results presented here open several perspectives. 
We certainly intend to develop the characterization of the multivariate special functions (\ref{prodC}) that have arisen in our study with an eye to their potential polynomiality. The following three directions also seem promising.\vspace{1mm}

Firstly, higher rank generalizations of the
$q-$Onsager algebra, denoted $O_q(\widehat{g})$, have been introduced \cite[Definition 2.1]{BB}. In analogy with the $sl_2$ case, they can be understood as certain
coideal subalgebras of $U_q(\widehat{g})$ for any affine Lie algebra $\widehat{g}$, see \cite[Proposition 2.1]{BB} (see also \cite{Ko}).
In view of the results presented here, an interesting problem would
be to construct multivariable $q-$difference operators  for the basic generators of $O_q(\widehat{g})$   and their respective polynomial eigenfunctions, expressed as entangled products of $q-$Pochhammer functions generalizing \cite{Ro2}. Such expressions should find applications in the context of quantum integrable models associated with higher rank symmetries, and provide a $q-$hypergeometric formulation of these models. For the $sl_2$ case, an example of such description is given in \cite{BM}.\vspace{1mm}

Secondly, as recently indicated, the representation theory of the so-called {\it asymetric} tridiagonal algebra \cite[Subsection 5.4]{BGV} is, in the simplest case, connected with the representation theory of the complementary Bannai--Ito and dual $-1$ Hahn algebras \cite{GVZ2}, as with the construction of univariate orthogonal polynomials beyond\footnote{They satisfy a bispectral problem associated with a three-term recurrence  relation and a five-term difference equation \cite{GVZ2}.} the Askey-scheme. The asymetric tridiagonal algebra is closely related with the $q-$Onsager 
algebra specialized to $q$ a root of unity
 (for details, see \cite{BGV}). It would thus be of interest to study multivariate generalizations of these polynomials based on the coideal structure of the $q-$Onsager algebra for $q$ taken to be roots of unity, along the approach presented here.\vspace{1mm}

Thirdly, besides the infinite dimensional representations of the $q-$Onsager algebra built from the $q-$vertex operators formalism of $U_q(\widehat{sl_2})$ (see \cite{BB3}), it would be interesting to extend the analysis of the present paper to the limit $N\rightarrow \infty$. This should find applications in the context of quantum integrable models in the analysis of the thermodynamic limit of spin chains with open boundaries.\vspace{1mm}
 
Some of these problems will be adressed elsewhere.

\vspace{4mm}

\noindent{\bf Acknowledgements:} We would like to thank V. X. Genest for discussions, X. Martin and P. Terwilliger for comments on the manuscript. P.B. is grateful to  H. Rosengren for communications. We thank N. Cramp\'e for pointing out some typos in published version, that are corrected in this version. 
P.B. also acknowledges the hospitality and support from the Centre de Recherches Math\'ematiques and CRM-UMI 3457 C.N.R.S. where most of this work has been done.
 P.B. is supported by C.N.R.S. The work of LV is funded by a grant from the Natural Sciences and Engineering Research Council (NSERC) of Canada.

\vspace{0.2cm}

\begin{appendix}
\section{$U_{q}(\widehat{sl_2})$, $U_{q}(sl_2)$ and polynomial bases}
\subsection{The quantum algebra  $U_{q}(\widehat{sl_2})$}\label{A1}

The quantum Kac-Moody algebra $U_{q}(\widehat{sl_2})$ is generated by the elements
$\{H_j,E_j,F_j\}$, $j\in \{0,1\}$. Denote the entries of the extended Cartan matrix\,\footnote{With $i,j\in\{0,1\}$: $a_{ii}=2$,\ $a_{ij}=-2$ for $i\neq j$.} as $\{a_{ij}\}$. The defining relations are:
\beqa [H_i,H_j]=0\ , \quad [H_i,E_j]=a_{ij}E_j\ , \quad
[H_i,F_j]=-a_{ij}F_j\ ,\quad
[E_i,F_j]=\delta_{ij}\frac{q^{H_i}-q^{-H_i}}{q-q^{-1}}\
\nonumber\eeqa
together with the $q-$Serre relations
\beqa [E_i,[E_i,[E_i,E_j]_{q}]_{q^{-1}}]=0\ ,\quad \mbox{and}\quad
[F_i,[F_i,[F_i,F_j]_{q}]_{q^{-1}}]=0\ . \label{defUq}\eeqa
The sum ${\it K}=H_0+H_1$ is the central element of the algebra.
\vspace{1mm}

We endow $U_{q}(\widehat{sl_2})$  with a
comultiplication $\Delta: U_{q}(\widehat{sl_2})\rightarrow U_{q}(\widehat{sl_2})\otimes U_{q}(\widehat{sl_2})$ 
%and a counit ${\cal E}: U_{q}(\widehat{sl_2})\rightarrow {\mathbb C}$
 with
\beqa \Delta(E_i)&=&E_i\otimes q^{H_i/2} +
q^{-H_i/2}\otimes E_i\ ,\label{coprod} \\
 \Delta(F_i)&=&F_i\otimes q^{H_i/2} + q^{-H_i/2}\otimes F_i\ ,\nonumber\\
 \Delta(H_i)&=&H_i\otimes I\!\!I + I\!\!I \otimes H_i.\nonumber
\eeqa
%
%and\vspace{-0.3cm}
%
%\beqa {\cal E}(E_i)={\cal E}(F_i)={\cal
%E}(H_i)=0\ ,\qquad {\cal E}({I\!\!I})=1\
%.\label{counit}\nonumber\eeqa
%
More generally, one defines the $N-$coproduct $\Delta^{(N)}: \
U_{q}(\widehat{sl_2}) \longrightarrow
U_{q}(\widehat{sl_2}) \otimes \cdot\cdot\cdot \otimes
U_{q}(\widehat{sl_2})$ as
\beqa \Delta^{(N)}\equiv (id\times \cdot\cdot\cdot \times id
\times \Delta)\circ \Delta^{(N-1)}\ \label{coprodN}\eeqa
for $N\geq 3$ with $\Delta^{(2)}\equiv \Delta$,
$\Delta^{(1)}\equiv id$.
% The opposite $N-$coproduct
%$\Delta'^{(N)}$ can be similarly defined with $\Delta'\equiv \sigma
%\circ\Delta$   where the permutation map $\sigma(x\otimes y
%)=y\otimes x$ for all $x,y\in U_{q^{1/2}}(\widehat{sl_2})$ is
%used. 
%Also, let us mention the algebra automorphism $\Omega$  of $U_{q}(\widehat{sl_2})$ such that
%
%\beqa
%\Omega(E_i)=E_j\ ,\quad \Omega(F_i)=F_j\ ,\quad \Omega(H_i)=H_j\ ,\qquad \mbox{for}\qquad i\neq j\ .
%\eeqa  
%
%
\vspace{1mm}

\subsection{The evaluation representation of  $U_{q}(\widehat{sl_2})$ (quantum loop algebra of $sl_2$) \cite{J2,CP}}\label{A2}
Infinite dimensional representations of  $U_{q}(\widehat{sl_2})$ associated with $K\equiv 0$ are the so-called  
 `evaluation representations'. They are constructed as follows. First, one introduces the evaluation homomorphism $\pi_v: U_{q}(\widehat{sl_2}) \mapsto U_{q}(sl_2)$ in the so-called principal gradation \cite{J2}:   
\beqa &&\pi_v[E_1]= vS_+\ , \qquad \ \ \ \pi_v[E_0]= vS_-\ , \label{homeval}\\
&&\pi_v[F_1]=
v^{-1}S_-\ ,\qquad \pi_v[F_0]= v^{-1}S_+\ ,\nonumber\\
\ &&\pi_v[q^{H_1/2}]= q^{s_3}\ ,\qquad \ \pi_v[q^{H_0/2}]=
q^{-s_3}\ ,\nonumber\eeqa
where  $v$ is called the evaluation parameter and the generators of $U_{q}(sl_2)$ satisfy
\beqa
[s_3,S_\pm]=\pm S_\pm \qquad \mbox{and} \qquad
[S_+,S_-]=\frac{q^{2s_3}-q^{-2s_3}}{q-q^{-1}}\ .\label{Uqsl2}
\eeqa
Note that the central element of $U_q(sl_2)$ is the Casimir operator:
\beqa
\Omega = \frac{q^{-1}q^{2s_3}+ q q^{-2s_3}}{(q-q^{-1})^2} + S_+S_-.\label{Casimir}
\eeqa

Let ${V}_j$ denote an irreducible finite dimensional representation of $U_q(sl_2)$. On $V_j$, the eigenvalue of $\Omega$ is given by:
\beqa
\omega_j=\frac{(q^{2j+1} + q^{-2j-1})}{(q-q^{-1})^2}.
\eeqa
Introduce the set of evaluation parameters $\{v_i|i=1,...,N\}$. An evaluation representation of $U_{q}(\widehat{sl_2})$ is given by
${\cal V}_{j}(v)\equiv {\mathbb C}[v,v^{-1}]\otimes V_j$, with the Chevalley generators  of $U_{q}(\widehat{sl_2})$ represented according to (\ref{homeval}) \cite{J2}. More generally, using the $N-$coproduct homomorphism (\ref{coprodN}), $N-$tensor products of evaluation representations denoted ${\cal V}^{(N)}$ are built as follows:
\beqa
{\cal V}^{(N)}= {\cal V}_{j_N}(v_N)\otimes  \cdots \otimes {\cal V}_{j_2}(v_2) \otimes {\cal V}_{j_1}(v_1).
\eeqa
Under certain conditions on the parameters $v_i,i=1,...,N$, ${\cal V}^{(N)}$  is irreducible \cite[Section 4.8]{CP}.
\vspace{1mm}

\subsection{The $q-$difference operators realization of $U_q(sl_2)$}\label{A3}
Irreducible finite dimensional representations $V_j$ of dimension $2j+1$ can be realized by $q-$difference operators acting in the linear space of one-variable polynomials $\mathcal{P}^{(1)}_z$ of degree $2j$.  There exists an homomorphism \cite{S1}:
\beqa
&& q^{s_3}\mapsto q^{-j}T_+, \qquad \qquad \qquad \qquad q^{-s_3}\mapsto q^{j}T_-,\\
&& S_+ \mapsto z\frac{(q^{2j}T_- - q^{-2j}T_+)}{(q-q^{-1})},\qquad S_-\mapsto -z^{-1}\frac{(T_- - T_+)}{(q-q^{-1})}.\nonumber
\eeqa

\vspace{3mm}

\section{The expansion coefficients of Example \ref{ex2}}
Note that all coefficients below depend implicitly on $n_1,j_1$, see (\ref{zpmi}) and (\ref{spec1}).

\beqa
A_2&=& \frac{z_-^{(2)}q^{2-4j_2}(1+q^2)(1-q^{-2n_2})\left({z_+^{(2)}}^2q^{4n_2-8j_2} - z_+^{(2)}z_-^{(2)}q^{-4j_2}(q^{2n_2}+1) + {z_-^{(2)}}^2q^{-2n_2}  \right)}{(z_-^{(2)}q^{-2n_2} -z_+^{(2)}q^{2-4j_2+2n_2})(z_-^{(2)}q^{2-2n_2} -z_+^{(2)}q^{2-4j_2+2n_2})(z_+^{(2)}q^{-4j_2+2n_2} -z_-^{(2)}q^{2-2n_2})}|_{n_1 \rightarrow n_1+1},\nonumber\\
B_2&=& \frac{q^{2-4j_2}(z_-^{(2)} -z_+^{(2)}q^{2-4j_2+2n_2})(z_-^{(2)} -z_+^{(2)}q^{2n_2-4j_2})}{(z_-^{(2)}q^{2-2n_2} -z_+^{(2)}q^{2-4j_2+2n_2})(z_-^{(2)}q^{-2n_2} -z_+^{(2)}q^{2-4j_2+2n_2})}|_{n_1 \rightarrow n_1+1},\nonumber\\
C_2&=& \frac{{z_-^{(2)}}^2q^{2-4j_2}(q^{2-2n_2}-1)(q^{-2n_2}-1)}{(z_-^{(2)}q^{2-2n_2} -z_+^{(2)}q^{2-4j_2+2n_2})(z_-^{(2)}q^{2-2n_2} -z_+^{(2)}q^{-4j_2+2n_2})}|_{n_1 \rightarrow n_1+1},\nonumber
\eeqa
\beqa
A'_2&=& \frac{z_+^{(2)}q^{2-4j_2}(1+q^2)
(q^{-4j_2}-q^{-2n_2})\left({z_+^{(2)}}^2q^{4n_2-4j_2} - z_+^{(2)}z_-^{(2)}q^{-4j_2}(q^{4j_2}+q^{2n_2}) + {z_-^{(2)}}^2q^{-2n_2}  \right)}{(z_+^{(2)}q^{2n_2-4j_2} -z_-^{(2)}q^{2-2n_2})(z_+^{(2)}q^{2-4j_2+2n_2} -z_-^{(2)}q^{2-2n_2})(z_+^{(2)}q^{2-4j_2+2n_2} -z_-^{(2)}q^{-2n_2})}|_{n_1 \rightarrow n_1-1}
,\nonumber\\
B'_2&=& \frac{{z_+^{(2)}}^2q^{2-4j_2}(q^{2-4j_2+2n_2}-1)(q^{2n_2-4j_2}-1)}{(z_+^{(2)}q^{2-4j_2+2n_2} -z_-^{(2)}q^{-2n_2})(z_+^{(2)}q^{2-4j_2+2n_2} -z_-^{(2)}q^{2-2n_2})}|_{n_1 \rightarrow n_1-1},\nonumber\\
C'_2&=& \frac{q^{2-4j_2}(z_-^{(2)}q^{-2n_2} -z_+^{(2)})(z_-^{(2)}q^{2-2n_2} -z_+^{(2)})}{(z_+^{(2)}q^{2-4j_2+2n_2} -z_-^{(2)}q^{2-2n_2})(z_+^{(2)}q^{2n_2-4j_2} -z_-^{(2)}q^{2-2n_2})}|_{n_1 \rightarrow n_1-1}\nonumber
\eeqa
and\footnote{$[n]_q=\frac{q^n-q^{-n}}{q-q^{-1}}$.}
\beqa
A_{aux}&=& \frac{ (q^{-2j_2}z_+^{(2)}z_-^{(2)} b_1^{(2)} + + q^{-2}c_1^{(2)})u^{(2)} + q^{-2}{\cal A}_{n_1}^{[0]} x^{(2)} }{(q^{4n_2}z_+^{(2)} - q^{4j_2}z_-^{(2)})
(q^{4n_2} z_+^{(2)}   - q^{4j_2+2}z_-^{(2)}) (q^{4n_2} z_+^{(2)}   - q^{4j_2-2}z_-^{(2)})}
,\nonumber \\
B_{aux}&=&\frac{(1-q^{2n_2-4j_2})q^{4n_2+8j_2} \left(  q^{4n_2-6j_2+4}{z_+^{(2)}}^2b_1^{(2)}  + q^{2n_2-2j_2+2}z_+^{(2)}{\cal A}_{n_1}^{[0]} +c_1^{(2)}\right)v^{(2)}}{(q^{4n_2}z_+^{(2)} - q^{4j_2}z_-^{(2)})(q^{4n_2} z_+^{(2)}   - q^{4j_2+2}z_-^{(2)}) (q^{4n_2} z_+^{(2)}   - q^{4j_2-2}z_-^{(2)})} ,\nonumber \\
C_{aux}&=& \frac{(1-q^{2n_2})q^{4n_2+8j_2-2} \left(  q^{-4n_2+2j_2+4}{z_-^{(2)}}^2b_1^{(2)}  + q^{-2n_2+2j_2+2}z_-^{(2)}{\cal A}_{n_1}^{[0]} +c_1^{(2)}\right)w^{(2)}}{(q^{4n_2}z_+^{(2)} - q^{4j_2}z_-^{(2)})(q^{4n_2} z_+^{(2)}   - q^{4j_2+2}z_-^{(2)}) (q^{4n_2} z_+^{(2)}   - q^{4j_2-2}z_-^{(2)})} \nonumber
\eeqa
where
\beqa
x^{(2)}&=& q^{4n_2+6j_2}\left((1+q^2) (q^{4j_2+2}+1)-q^{-2n_2+4j_2+2}\right)  z_+^{(2)}{z_-^{(2)}}^2\nonumber\\
&&\quad - \  q^{8n_2+2j_2}\left( (1+q^2) (q^{4j_2+2}+1)-q^{2n_2+2}\right)  {z_+^{(2)}}^2{z_-^{(2)}}\nonumber\\
&&\quad + \ q^{10n_2+2j_2+2}{z_+^{(2)}}^3
- \ q^{2n_2+10j_2+2}{z_-^{(2)}}^3,
\nonumber\\
u^{(2)}&=&q^{6n_2+8j_2}(q^{2n_2-4j_2}-q^{-2n_2})(1+q^2)z_+^{(2)}z_-^{(2)}\nonumber\\
&& \quad + \ \left(q^{8n_2+4j_2}(1+q^2) -q^{10n_2}(1+q^{4j_2+2})\right){z_+^{(2)}}^2\nonumber\\
&& \quad - \ \left(q^{4n_2+8j_2}(1+q^2) -q^{2n_2+8j_2}(1+q^{4j_2+2})\right){z_-^{(2)}}^2,\nonumber\\
v^{(2)}&=&(1+q^{2n_2-2})z_+^{(2)}z_-^{(2)} -q^{4j_2-2n_2}{z_-^{(2)}}^2 -q^{4n_2-4j_2-2}z_+^{(2)},\nonumber\\
w^{(2)}&=&(1+q^{2n_2-4j_2+2})z_+^{(2)}z_-^{(2)} -q^{-2n_2}{z_-^{(2)}}^2 -q^{4n_2-4j_2+2}z_+^{(2)}.\nonumber
\eeqa

\vspace{3mm}

\section{The $q-$Racah and the dual $q-$Krawtchouk polynomials}
\subsection{The $q-$Racah polynomials:} Let $M$ be a positive integer and $n=0,1,...,M$. We denote the $q-$Racah polynomial  $R_n(\mu(\tilde{n}))$ with argument $\mu(\tilde{n})=q^{-2\tilde{n}} + \gamma\delta q^{2(\tilde{n}+1)}$ as \cite[Section 3.2]{KS}:
\beqa
\qquad \quad R_n(\mu(\tilde{n});\alpha,\beta,\gamma,\delta;q^2) &=&\fpt{q^{-2n}} { \alpha\beta q^{2(n+1)} }{q^{-2\tilde{n}}}{\gamma\delta q^{2(\tilde{n}+1)}}{\alpha q^2}{\beta\delta q^2}{\gamma q^2} \quad \mbox{with} \quad \gamma=q^{-2M-2}.\label{qR}
\eeqa
The $q-$Racah polynomials satisfy the orthogonality condition:
\beqa
\sum_{\tilde{n}=0}^{M} \frac{(\gamma\delta q^2,\alpha q^2,\beta\delta q^2,\gamma q^2;q^2)_{\tilde{n}}}{(q,\gamma\delta q^2/\alpha,\gamma q^2/\beta, \delta q^2;q^2)_{\tilde{n}}}\frac{(1-\gamma\delta q^{4\tilde{n}+2})}{(\alpha\beta q^2)^{\tilde{n}}(1-\gamma\delta q^2)}  R_m(\mu(\tilde{n})) R_n(\mu(\tilde{n}))=h_n\delta_{mn}\label{orthoqRac}
\eeqa
where
\beqa
h_n=\frac{(\alpha\beta q^4,1/\delta;q^2)_N}{(\beta q^2,\alpha q^2/\delta;q^2)_N}\frac{(1-\alpha\beta q^2)(\delta q^{-2M})^n}{(1-\alpha\beta q^{4n+2})}\frac{(q^2,\beta q^2,\alpha q^2/\delta,\alpha\beta q^{2M+4};q^2)_n}{(\alpha\beta q^2,\alpha q^2,\beta\delta q^2,q^{-2M};q^2)_n}\nonumber
\eeqa
and the three-term recurrence relations: 
\beqa
\mu(\tilde{n})R_n(\mu(\tilde{n})) = B_n R_{n+1}(\mu(\tilde{n})) +(1+\gamma\delta q^{2}-B_n - C_n)R_{n}(\mu(\tilde{n})) +  C_n R_{n-1}(\mu(\tilde{n}))\label{threerec1}
\eeqa
where
\beqa
B_n &\equiv& B_n(\alpha,\beta,\gamma,\delta) =\frac{(1-\alpha q^{2n+2})(1-\gamma q^{2n+2})(1-\alpha\beta q^{2n+2})(1-\beta\delta q^{2n+2})}{(1-\alpha\beta q^{4n+2})(1-\alpha\beta q^{4n+4})},\label{Bcoeff}\\
C_{n}&\equiv&C_n(\alpha,\beta,\gamma,\delta)=\frac{q^2(1- q^{2n})(1-\beta q^{2n})(\delta-\alpha q^{2n})(\gamma-\alpha\beta q^{2n})}{(1-\alpha\beta q^{4n})(1-\alpha\beta q^{4n+2})}.\label{Ccoeff}
\eeqa

\vspace{1mm}

\subsection{The dual $q-$Krawtchouk polynomials:}\label{B3}  Let $M$ be a positive integer and $n=0,1,...,M$. We denote the dual $q-$Krawtchouk polynomial  $K_n(\mu(\tilde{n}))$ with argument $\mu(\tilde{n})=q^{-2\tilde{n}} + c q^{2(\tilde{n}-M)}$ as \cite[Section 3.17]{KS}:
\beqa
\qquad \quad K_n(\mu(\tilde{n});c,M;q^2) &=&\fpts{q^{-2n}}{q^{-2\tilde{n}}} {c q^{2(\tilde{n}-M)}}{ q^{-2M}}{ q^{-2M}}{ 0} .\label{qK}
\eeqa
The dual $q-$Krawtchouk polynomials satisfy the orthogonality condition for $c< 0$:
\beqa
&& \sum_{\tilde{n}=0}^{M} \frac{( cq^{-2M},q^{-2M};q^2)_{\tilde{n}}}{(q^2,cq^2;q^2)_{\tilde{n}}}
\frac{(1-c q^{4\tilde{n} -2M})}{(1-c q^{-2M})}  c^{-\tilde{n}} q^{2\tilde{n}(2M-\tilde{n})}K_m(\mu(\tilde{n})) K_n(\mu(\tilde{n}))= \label{orthoqK} \\
&& \qquad\qquad     \qquad\qquad \qquad\qquad \qquad\qquad \qquad\qquad \qquad\qquad \frac{(c^{-1};q^2)_M(q^2;q^2)_n}{(q^{-2M};q^2)_n}(cq^{-2M})^n\delta_{mn}\nonumber.
\eeqa
and the three-term recurrence relations: 
\beqa
\mu(\tilde{n})K_n(\mu(\tilde{n})) = (1-q^{2n-2M})K_{n+1}(\mu(\tilde{n})) + (1+c)q^{2n-2M}K_n(\mu(\tilde{n}))+  cq^{-2M}(1-q^{2n})K_{n-1}(\mu(\tilde{n}))\label{threerec2}\nonumber.
\eeqa
\vspace{1mm}

\subsection{Inversion and fusion formulas:}
The generating function of the dual $q-$Krawtchouk polynomials can be found in  \cite[eq. (3.17.11)]{KS}. Explicitly, one has:
\beqa
(cy;q^2)_{\tilde{n}}(dy;q^2)_{M-\tilde{n}}=\sum_{n=0}^{M} (dq^{2M})^n \frac{(q^{-2M};q^2)_n}{(q^2;q^2)_n} K_n(\mu(\tilde{n});c/d,M;q^2) y^n
\eeqa
The relation can be inversed using the orthogonality condition (\ref{orthoqK}). It yields to:
\beqa
y^n=\sum_{l=0}^{M} \frac{(d/c)^l c^{-n}q^{2l(2M-l)}}{(d/c;q^{2})_{M}} \frac{ (q^{-2M}c/d,q^{-2M};q^{2})_{l}} {(q^{2},q^{2}c/d;q^{2})_{l}} \frac{(1-c q^{-2M+4l}/d)}{(1-c q^{-2M}/d)} K_n(\mu(l);c/d,M;q^{2})  (cy;q^2)_{l}(dy;q^2)_{M-l}. \nonumber
\eeqa
From this relation, it follows that the coefficients entering in (\ref{expC}) can be written in terms of products of dual $q-$Krawtchouk polynomials. Namely:
\beqa 
&&  \qquad  \qquad C_{n}^{\tilde{n}}(a,b,c,d;N;q^2)  = 
\frac{q^{2{\tilde{n}}(2N-{\tilde{n}})}}{(c/d)^{\tilde{n}}(d/c;q^{2})_{N}} \frac{ (q^{-2N}c/d,q^{-2N};q^{2})_{{\tilde{n}}}} {(q^{2},q^{2}c/d;q^{2})_{{\tilde{n}}}} \frac{(1-c q^{-2N+4{\tilde{n}}}/d)}{(1-c q^{-2N}/d)}\label{Cfuse}\\
&& \qquad  \qquad  \qquad  \qquad \qquad   \qquad \times \
\sum_{m=0}^{N}  \frac{q^{-2mN}}{ (c/b)^{m}}\frac{(q^{2N};q^{-2})_m}{(q^{-2};q^{-2})_m}
K_m(\mu(n)|_{q\rightarrow q^{-1}};a/b,N;q^{-2}) K_m(\mu(\tilde{n});c/d,N;q^{2}) .\nonumber
\eeqa

\vspace{3mm}

\end{appendix}

\vspace{0.5cm}

\end{document}